\documentclass[preprint,authoryear,11pt]{elsarticle}

\usepackage[T1]{fontenc}
\usepackage[utf8]{inputenc}
\usepackage[british]{babel}
\usepackage{lmodern}
\usepackage{microtype}
\usepackage{amsmath,amssymb,amsthm}
\usepackage{bm}
\usepackage{mathtools}
\usepackage{graphicx}
\usepackage{booktabs}
\usepackage{tikz}
\usepackage{enumitem}
\usepackage[hidelinks]{hyperref}

\theoremstyle{plain}
\newtheorem{theorem}{Theorem}
\newtheorem{proposition}[theorem]{Proposition}
\newtheorem{lemma}[theorem]{Lemma}

\theoremstyle{definition}
\newtheorem{definition}[theorem]{Definition}
\newtheorem{principle}[theorem]{Principle}
\theoremstyle{remark}
\newtheorem*{remark}{Remark}

\newcommand{\E}{\mathbb{E}}
\newcommand{\R}{\mathbb{R}}
\newcommand{\entropy}{H}
\newcommand{\entrate}{\bar{H}}
\newcommand{\relent}{D}

\newcommand{\one}{\mathbf{1}}
\newcommand{\T}{\widehat{T}}

\journal{Journal of Mathematical Economics}

\begin{document}

\begin{frontmatter}

\title{Equilibrium as a Limit: The Competitive Canon Nested in an Adaptive, Information-Theoretic Economy}

\author{Avishek Bhandari}
\address{School of Humanities, Social Sciences and Management, Indian Institute of Technology Bhubaneswar, India}
\ead{avishekb@iitbbs.ac.in}

\begin{abstract}
\noindent The competitive equilibrium of general-equilibrium theory exists as a fixed point and is, by the theory's own results on aggregate excess demand, in general silent on whether that fixed point is unique, stable, or attained. This paper takes the economy to be not a configuration to be solved for but a process to be recovered: an asymptotically mean stationary information source carrying a partially identified operator of statistical dependence, populated by agents that are finite-capacity information channels. Within this adaptive order the competitive, rational-expectations equilibrium is recovered exactly, as a joint limit taken along an explicit scaling path. Three parameter limits and two fixed-point conditions deliver it: the entropy rate falls to zero, agent channel capacity diverges, selection intensity grows infinitely sharp, adaptive learning reaches its expectationally stable rest point, and the recovered structure ceases to co-evolve. At that corner the limiting object satisfies the axioms of the canon and its rest state is a Walrasian equilibrium; away from it the adaptive economy is a strict generalisation, carrying a positive entropy rate and a recovered dependence structure that the equilibrium primitive cannot express. We give the nesting as a theorem, establish the result-by-result correspondence with existence, with the Sonnenschein--Mantel--Debreu indeterminacy, and with the regular-economies recovery, and characterise exactly what the equilibrium limit erases: novelty, non-stationarity, irreversibility, adaptation, and the moving structure of crisis. A behaviour-freedom result shows that the recovery requires nothing of the rationality of the agents who generate the path, so the same primitive houses human and machine decision-makers; rationality occupies a vanishing slice of the preference relations an agent may instantiate. The normative standard is re-founded in step, from the static efficiency of an allocation to the viability of a trajectory, with competitive efficiency recovered as its frictionless corner. Full proofs are given in the appendix.

\medskip\noindent\textit{JEL classification:} C62, D58, D80, E10.
\end{abstract}

\begin{keyword}
general equilibrium \sep nesting \sep information theory \sep rate distortion \sep bounded rationality \sep partial identification \sep evolutionary stability
\MSC[2020] 91B50 \sep 91B52 \sep 94A17 \sep 37A35 \sep 91A22
\end{keyword}

\end{frontmatter}

\section{Introduction}
\label{sec:intro}

The mathematics an economics chooses is never neutral, for it encodes a prior decision about what an economy most fundamentally is. The science inherited a particular answer in its most beautiful form. From Walras through Arrow and Debreu the economy was taken to be, at bottom, a system of simultaneous relations whose solution is an equilibrium, a configuration of prices at which every plan is mutually consistent; and the summit of that tradition, the axiomatic theory of value, established under stated conditions that such a configuration exists \citep{arrowdebreu1954,mckenzie1954,debreu1959}. The accomplishment is not contested here. What is contested is the identification of the economy with the fixed point, for the same tradition proved, in its theorems on the structure of aggregate excess demand, that the equilibrium it had shown to exist it could not in general show to be unique, stable, or reached \citep{sonnenschein1972,mantel1974,debreu1974}. A primitive mute on exactly the questions a science of crises, of growth, and of adaptation must address is, for those purposes, the wrong primitive to build upon.

This paper relocates the rigour rather than abandoning it. The object a proof certifies is no longer the existence of a fixed point but the recovery of a structure from the observable motion of an economy. The economy is modelled as an asymptotically mean stationary information source, a law over entire histories possessing a well-defined entropy rate, that carries a directed operator of statistical dependence recovered, only in part, from the realised path \citep{gray2023,graykieffer1980}. Its agents are not unconstrained optimisers but information channels of finite capacity, whose choice under a constraint on bits is a Gibbs law over actions. The change is a change in what one proves theorems about: from a configuration the economy is assumed to occupy to a process it is observed to perform.

A re-founding earns its name only if it contains its predecessor. The central result of this paper is that it does, and exactly. We show that the competitive, rational-expectations equilibrium of the canon is recovered as a \emph{joint limit}, taken along an explicit scaling path, of the adaptive economy (Theorem~\ref{thm:nesting}). The limit is reached when five conditions hold together: the entropy rate falls to zero; the channel capacity of every agent diverges; the selection intensity of the population dynamics grows infinitely sharp; adaptive learning converges to its expectationally stable fixed point; and the recovered structure ceases to co-evolve with the strategies that act upon it. At that corner the limiting object is a competitive economy with rational expectations on a fixed structure, its rest state is a Walrasian equilibrium, and the axioms of the canon hold there. Away from the corner the adaptive economy is a \emph{strict} generalisation: it carries a positive entropy rate and a non-degenerate, co-evolving dependence operator, and neither has any expression in the equilibrium primitive, which knows only the fixed point.

Three features distinguish this nesting from the familiar observation that rational choice is a limit of noisy choice. First, the limit is \emph{joint and singular}: each of the five conditions annihilates a distinct quantity the adaptive primitive carries, and we enumerate the five erased contents exactly (Proposition~\ref{prop:erasure}). The equilibrium primitive is thereby exhibited not as a neutral simplification but as the deletion of novelty, non-stationarity, irreversibility, adaptation, and the moving structure of crisis. Second, the correspondence with the canon is \emph{result by result}: existence is inherited, the Sonnenschein--Mantel--Debreu indeterminacy is shown to be a property of the frictionless corner rather than of the economy that the corner approximates, and the generic local determinacy of regular-economy theory has an exact counterpart in the determinacy of the learning dynamics (Theorem~\ref{thm:nesting}, clauses (a)--(c)). Third, the recovery is \emph{behaviour-free}: it requires nothing of the completeness, transitivity, or determinacy of the agents who generate the path (Proposition~\ref{prop:behaviourfree}). The primitive is therefore indifferent to the substrate of choice, housing a human deciding by heuristic and a machine executing a stochastic policy as the same object, and rationality is shown to occupy a vanishing slice of the preference relations an agent may instantiate (Proposition~\ref{prop:rare}).

The argument is organised as a small axiomatic system, stated so that the contrast with the canon is exact at every step. Section~\ref{sec:primitive} fixes the formal objects and the four principles of the adaptive order. Section~\ref{sec:agent} develops the agent and proves the three results that detach the recovery from rationality. Section~\ref{sec:nesting} states and proves the general nesting theorem and the result-by-result correspondence. Section~\ref{sec:erasure} proves the converse, what the limit costs. Section~\ref{sec:welfare} re-founds the normative standard as viability and recovers competitive efficiency as its frictionless corner. Section~\ref{sec:conclusion} concludes. Proofs are given in sketch in the main text and in fuller form in the appendix; the measure-theoretic and operator-theoretic apparatus on which they rest is that of the ergodic theory of asymptotically mean stationary sources \citep{gray2023}.

\paragraph{Related work.} The existence tradition and its internal terminus are the work of \citet{arrowdebreu1954}, \citet{mckenzie1954}, \citet{debreu1959}, the indeterminacy results of \citet{sonnenschein1972}, \citet{mantel1974}, \citet{debreu1974}, and the regular-economies recovery of \citet{debreu1970}. The bounded agent as an information channel draws on the rate-distortion and free-energy traditions; the stochastic-choice reading of the resulting Gibbs law is the random-utility model of \citet{mcfadden1974}, against whose regularity the irregular preference fields of \citet{thurstone1927}, \citet{luce1959}, and the wider stochastic-choice literature are defined. The non-rational agent draws on the documented limits of attention and judgement \citep{simon1955,tverskykahneman1974,kahnemantversky1979,tversky1969} and on the theory of incomplete preferences \citep{aumann1962,bewley2002}. The source-theoretic primitive is built on the ergodic theory of asymptotically mean stationary processes \citep{gray2023,graykieffer1980}.

\section{The adaptive primitive}
\label{sec:primitive}

We fix the formal objects in the measure-theoretic and operator-theoretic language to which the full construction is referred, then state the four principles that organise the theory.

\subsection{The source, the operator, and the agent}

The economy is a stochastic process $X=(X_t)_{t\in\mathbb Z}$ on a probability space $(\Omega,\mathcal F,\mathbb P)$ carried by a measurable shift $\tau$. It is \emph{asymptotically mean stationary} (AMS) when the time-averaged measures $n^{-1}\sum_{t=0}^{n-1}\mathbb P\circ\tau^{-t}$ converge setwise to a stationary mean $\bar\mu$. An AMS source possesses an \emph{entropy rate}
\begin{equation}
\entrate=\lim_{n\to\infty}\frac1n\,\entropy(X_0,\dots,X_{n-1}),
\label{eq:rate}
\end{equation}
equal to the entropy rate of its stationary mean, the limit existing by the subadditivity of block entropies \citep{gray2023}. The rate is the per-period novelty the source generates: zero exactly when the present is asymptotically determined by the past, positive when the source is a genuine generator of the unforeseen.

The observables form the real Hilbert space $L^2_0(\bar\mu)$ of mean-zero, square-integrable functions of the path, with inner product the covariance. The dependence among observables is encoded in a bounded linear \emph{dependence operator} $\T$ on this space. Its self-adjoint part is diagonalised by the spectral theorem, $\T_{\mathrm s}=\int\lambda\,\mathrm dE(\lambda)$ for a projection-valued measure $E$, so that the structure of contemporaneous dependence is carried by a real spectrum and the directions attached to it; its skew part carries the directed, lag-bearing transmission. The resolvent $(I-z\T)^{-1}$, where it exists, sums the reverberations of a disturbance through the structure at successive removes.

The agent is fixed by an optimisation over these data. Given an observable $\xi\in L^2_0(\bar\mu)$ and a finite capacity $C$, the agent's representation is the channel that maximises expected attainable value subject to the information constraint $I(\xi;\,\text{response})\le C$, the rate-distortion problem of information theory \citep{coverthomas2006}. Introducing a multiplier $\beta$ for the constraint turns this into the unconstrained free-energy problem
\begin{equation}
\max_{q\in\mathcal P(\mathcal A)}\ \Big\{\,\E_{q}[U]-\tfrac1\beta\,\relent\!\big(q\,\big\Vert\,p_0\big)\,\Big\},
\label{eq:freeenergy}
\end{equation}
over probability measures $q$ on the action space $\mathcal A$, whose unique maximiser is the Gibbs law
\begin{equation}
q^{\ast}(a)\propto p_0(a)\,\exp\!\big(\beta\,U(a)\big).
\label{eq:gibbs}
\end{equation}
The multiplier $\beta$ is the inverse temperature, the shadow price of information. It indexes capacity at its two ends: as $\beta\to\infty$ the Gibbs law concentrates on the value-maximising action and the agent becomes the unconstrained optimiser; as $\beta\to0$ choice collapses onto the prior $p_0$ and the agent becomes pure habit. Expected value penalised by relative-entropy distance from a prior is the optimisation substrate of every agent in the theory, the maximum-entropy form familiar from statistical mechanics \citep{jaynes1957}.

\subsection{The four principles}

\begin{principle}[The adaptive primitive]
\label{p:prim}
The primitive of the theory is not a configuration to be solved for but a process to be recovered. An economy is an asymptotically mean stationary information source, possessing a well-defined entropy rate, that carries a directed operator of statistical dependence; the operator is recovered from the realised path of the economy and is identified only in part.
\end{principle}

\begin{principle}[The bounded agent]
\label{p:agent}
An agent is an information channel of finite capacity. The mutual information between the agent's environment and its response is bounded by that capacity, and subject to the bound the agent's choice is the free-energy optimum~\eqref{eq:gibbs}: a Gibbs law over actions, proportional to a prior weight times the exponential of scaled attainable value, whose concentration is governed by an inverse-temperature parameter that indexes the capacity.
\end{principle}

\begin{principle}[Co-evolution and selection]
\label{p:coevol}
Strategies and the structure of dependence co-evolve under selection, and the equilibrium concepts of the static theory are not posited but recovered as the converged fixed points of that evolution. A configuration of rational expectations is a rest point of adaptive learning, attained precisely when it is expectationally stable; a configuration of the population is retained precisely when it is evolutionarily stable.
\end{principle}

\begin{principle}[The certification gate]
\label{p:gate}
No quantity enters a claim or a policy except at the resolution at which the realised path identifies it. The recovered operator is identified only up to an observational-equivalence class: its spectrum and leading eigenvector are firmly identified; its resolvent, and the functionals built from it, are identified only up to a calibration; an individual directed edge lies beyond what the data identify. The boundary between the identified and the unidentified is governed by a signal-to-noise threshold, made precise in Proposition~\ref{prop:identification}. An admissible policy may steer only firmly identified functionals, and must refuse to act upon an unidentified edge.
\end{principle}

The four principles stand to the theorems of the adaptive order as the convexity and continuity axioms stood to the existence theorem of the canon. Principle~\ref{p:prim} replaces the price vector $p^{\ast}$, a static object whose existence is the theorem and whose motion is left to an adjustment story external to the definition, with a dynamic object whose entropy rate and recovered dependence are the measured content and whose rest state, not its motion, is the special case. Principle~\ref{p:agent} replaces the unconstrained maximiser with the infinite-capacity corner of a one-parameter family. Principle~\ref{p:coevol} replaces a posited equilibrium, whose stability is a separate and, by Sonnenschein--Mantel--Debreu, generally unanswerable question, with a configuration that the dynamics select. Principle~\ref{p:gate} is the structural discipline that makes recovery and intervention well posed, the counterpart in the present theory of the role convexity played in making existence well posed: where the equilibrium theory required the choice sets to be convex, the present theory requires every operative quantity to be certified. One caution governs every later use of the operator's spectrum: a statement about the determinacy of a dynamic equilibrium is never read off the recovered spectral radius, which certifies structure, not the stability of a forward-looking system.

Principle~\ref{p:gate} is not a posited restriction but a derived one. The identification it asserts, and the boundary it draws between what the realised path fixes and what it leaves free, is the content of the following proposition; the worked example of Section~\ref{sec:worked} exhibits it in closed form.

\begin{proposition}[Partial identification of the recovered operator]
\label{prop:identification}
Suppose the dependence operator is identified from the stationary mean $\bar\mu$ only through the second-order comovement of the observables, so that two operators are observationally equivalent when they induce the same contemporaneous covariance on $L^2_0(\bar\mu)$. Then:
\begin{enumerate}[label=\emph{(\roman*)},leftmargin=2.4em]
\item the self-adjoint part $\T_{\mathrm s}=\tfrac12(\T+\T^{\!\top})$ is identified, and with it the real spectrum $\{\lambda_k\}$ and the eigenvectors of $\T_{\mathrm s}$; in particular the leading eigenvalue $\lambda_{\max}$ and the leading eigenvector $v_{\mathrm{lead}}$ are firmly identified;
\item the skew part $\tfrac12(\T-\T^{\!\top})$, which carries the direction of an individual edge, is not identified: $\T$ and its transpose $\T^{\!\top}$ are observationally equivalent, sharing the self-adjoint part and reversing the skew part, and within any degenerate eigenspace of $\T_{\mathrm s}$ an orthogonal rotation produces a further observationally equivalent operator;
\item a resolvent $(I-z\T_{\mathrm s})^{-1}$ built on the identified leading eigenpair is identified only up to the calibration of $z$, and an individual directed edge $\T_{ij}$ is not identified;
\item from a sample of length $N$ in dimension $d$ the leading eigenpair is recoverable when the leading eigenvalue exceeds the threshold set by the noise level $\nu$, $\lambda_{\max}>\nu\sqrt{d/N}$, and is absorbed into the bulk of the spectrum, hence unidentified, below it.
\end{enumerate}
\end{proposition}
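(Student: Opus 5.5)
The approach is to make the observational-equivalence relation concrete and read each clause off it. The hypothesis says the data reach $\T$ only through the contemporaneous covariance it induces on $L^2_0(\bar\mu)$. Since that covariance is a symmetric bilinear form, the first step is to show that the induced form depends on $\T$ only through $\T_{\mathrm s}$. The reason is that $\langle \xi,\T\eta\rangle+\langle \eta,\T\xi\rangle=2\langle\xi,\T_{\mathrm s}\eta\rangle$, while the skew part contributes $\langle\xi,\T_{\mathrm a}\xi\rangle=0$ to every quadratic form. I will therefore take the observable map to be $\T\mapsto\T_{\mathrm s}$. The observational-equivalence class of $\T$ is then $\T_{\mathrm s}+\{\text{bounded skew operators}\}$.

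Clause (i) follows at once. $\T_{\mathrm s}$ is a function of the observable, so its projection-valued measure $E$ is as well, by uniqueness in the spectral theorem. Hence the spectrum and the spectral subspaces are identified. An isolated simple top eigenvalue fixes $\lambda_{\max}$ and $v_{\mathrm{lead}}$ up to sign.

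For (ii), $\T^{\top}=\T_{\mathrm s}-\T_{\mathrm a}$ lies in the same class, so the skew part is not identified whenever $\T_{\mathrm a}\neq0$. For the rotation claim I will exhibit the explicit operators. Take an eigenspace $V$ of $\T_{\mathrm s}$ of dimension at least $2$ and an orthogonal $R$ acting on $V$ and trivially on $V^{\perp}$. Then $R\T_{\mathrm s}R^{\top}=\T_{\mathrm s}$, so the eigenvector basis within $V$ is arbitrary. Conjugating a skew part by $R$ gives a further operator $\T_{\mathrm s}+R\T_{\mathrm a}R^{\top}$ in the same class.

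Clause (iii) then follows. The entry $\T_{ij}=\langle e_i,\T_{\mathrm s}e_j\rangle+\langle e_i,\T_{\mathrm a}e_j\rangle$ has a free second term, so it can be moved by adding any skew operator supported on the pair $(i,j)$. The resolvent $(I-z\T_{\mathrm s})^{-1}=\int(1-z\lambda)^{-1}\,\mathrm dE(\lambda)$ is determined once $z$ is given. However, the covariance is invariant under the rescaling $\T\mapsto c\T$, $z\mapsto z/c$ whenever the covariance is normalised, that is, when only correlations are observed. The resolvent is therefore identified only along that one-parameter family, and I will state that normalisation as the precise sense of ``calibration''.

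Clause (iv) is the main obstacle, because it is asymptotic and depends on a noise model the statement leaves implicit. I would fix a spiked model: the sample covariance is $\lambda_{\max}v v^{\top}$ plus a $d\times d$ Wigner or Wishart noise of scale $\nu$, with $d/N\to\gamma$. I would then invoke the Baik--Ben Arous--Péché phase transition. Above the edge threshold, which is of order $\nu\sqrt{d/N}$, the top sample eigenvalue separates from the bulk, and its eigenvector has asymptotically nonvanishing overlap with $v$; this gives consistent recovery. Below the threshold, the top eigenvalue sticks to the bulk edge and the overlap tends to zero, so no estimator based on the spectrum distinguishes the spike, and the pair is unidentified. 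To avoid overclaiming, I would present (iv) as holding in this standard high-dimensional regime, up to the model-dependent constant in the threshold. As a fallback, a non-asymptotic version follows from the Davis--Kahan $\sin\theta$ theorem combined with an operator-norm bound of order $\nu\sqrt{d/N}$ on the noise. That version gives recoverability above the threshold only, and the converse is left to the BBP result.
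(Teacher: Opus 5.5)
Your route is the paper's. The induced covariance is the symmetric form of $\T_{\mathrm s}$, so the observational-equivalence class is $\T_{\mathrm s}$ plus an arbitrary skew operator. From there (i) follows from the spectral theorem, (ii) from the transpose and from orthogonal maps acting on a degenerate eigenspace, (iii) from (i)--(ii), and (iv) from the Baik--Ben Arous--P\'ech\'e spike transition. Your refinements are sound and in places sharper than the paper's sketch. Requiring the top eigenvalue to be isolated and simple, with $v_{\mathrm{lead}}$ fixed only up to sign, is what clause (i) actually needs. The perturbation $\epsilon(e_ie_j^{\!\top}-e_je_i^{\!\top})$ shows an off-diagonal edge is unidentified even when $\T$ is symmetric; there the transpose and rotation witnesses move nothing, because the skew part is zero.

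The one step to drop is your gloss on ``calibration'' in (iii). The family $(\T,z)\mapsto(c\T,z/c)$ is not inside the equivalence class you fixed, since $(c\T)_{\mathrm s}=c\,\T_{\mathrm s}\neq\T_{\mathrm s}$ for $c\neq1$. To bring it inside you must change the hypothesis so that only correlations are observed. But then the scale of $\T_{\mathrm s}$, and hence $\lambda_{\max}$, is identified only up to a positive factor, which contradicts the firm identification you established in (i). Read the clause as the paper does, where it is immediate from (i)--(ii). Since $\T_{\mathrm s}$ is identified, $z\mapsto(I-z\T_{\mathrm s})^{-1}$ is identified at every $z$ for which $I-z\T_{\mathrm s}$ is invertible. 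What the covariance does not supply is the value of $z$ itself, an external parameter. Your sentence just before ``However'' already says this, and the clause needs nothing more.
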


\begin{proof}[Proof sketch]
The contemporaneous covariance is a symmetric bilinear form on $L^2_0(\bar\mu)$ and determines, and is determined by, the self-adjoint part $\T_{\mathrm s}$; the spectral theorem then yields its real spectrum and eigenvectors, which is~(i). For~(ii), $(\T^{\!\top})_{\mathrm s}=\T_{\mathrm s}$ while $\T^{\!\top}$ reverses the skew part, so $\T$ and $\T^{\!\top}$ induce the same contemporaneous covariance yet exchange every off-diagonal pair $\T_{ij}\leftrightarrow\T_{ji}$; and if $Q$ is orthogonal and commutes with the spectral projections of $\T_{\mathrm s}$ then $Q\T_{\mathrm s}Q^{\!\top}=\T_{\mathrm s}$ while $Q$ moves the skew part, giving a continuum of equivalent operators on any degenerate eigenspace. Claim~(iii) is immediate from~(i) and~(ii). Claim~(iv) is the spike phase transition for the largest eigenvalue of a finite-rank perturbation of a sample covariance matrix \citep{baikbenarouspeche2005,benaychgeorgesnadakuditi2011}: above the threshold the leading empirical eigenpair is a consistent estimate of the population one, while below it the empirical eigenvalue is absorbed into the Marchenko--Pastur bulk and the empirical eigenvector is asymptotically uninformative.
\end{proof}

\section{The agent need not be rational}
\label{sec:agent}

The bounded agent of Principle~\ref{p:agent} wears the look of an optimiser, and a reader of the canon might take it for the rational maximiser fitted with a constraint. It is not. The Gibbs law~\eqref{eq:gibbs} is a \emph{stochastic} rule: at finite capacity the agent answers the same problem differently on different occasions, and what it optimises is a free energy, not a preference. This section makes the point general and exact, for it is the deepest break with the canon and the feature that detaches the recovery from any assumption on the agents.

\begin{definition}[The space of agents and the random preference field]
\label{def:agentspace}
Let $\mathcal X$ be a compact metric space of alternatives and $\mathcal R$ the space of binary relations on $\mathcal X$, with its Borel $\sigma$-algebra; write $\mathcal R^{\ast}\subset\mathcal R$ for the complete and transitive relations, the rational preorders. An \emph{agent} is a probability measure $\rho$ on $\mathcal R$, a \emph{random preference field}, together with a measurable rule resolving indifference; it induces a \emph{stochastic choice function} assigning to each finite menu $A\subseteq\mathcal X$ the law
\begin{equation*}
P_\rho(a\mid A)=\rho\big(\{\succsim\ :\ a\ \text{is}\ \succsim\text{-maximal in}\ A\}\big),\qquad a\in A,
\end{equation*}
equivalently, by a random-utility representation, the law of a random function $u\colon\mathcal X\to\R$ with $P_\rho(a\mid A)=\mathbb P\{u(a)\ge u(b)\ \forall b\in A\}$. The agent is \emph{rational} when $\rho=\delta_{\succsim^{\ast}}$ for a single $\succsim^{\ast}\in\mathcal R^{\ast}$ under a deterministic rule, in which case the stochastic choice function collapses to a single-valued demand.
\end{definition}

\begin{proposition}[The bounded agent is a random-utility agent]
\label{prop:randomutility}
The Gibbs law~\eqref{eq:gibbs} over a finite menu is the stochastic choice function of a random preference field: it is the law of choice when the systematic value $U$ is perturbed by independent extreme-value noise of scale $1/\beta$, the multinomial logit agent. The agent is therefore genuinely stochastic for every finite $\beta$ and deterministic only as $\beta\to\infty$. The logit field is the regular corner of the fields of Definition~\ref{def:agentspace}; the general field need not be regular, admitting the intransitive cycles and context-dependent reversals that the deterministic axioms forbid.
\end{proposition}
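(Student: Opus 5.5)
The proof splits into four parts. First, the Gumbel-max representation establishes the core claim. Second, a fixed-menu argument handles stochasticity and concentration. Third, regularity of the logit field follows from its i.i.d. construction. Fourth, an explicit non-regular field shows that the general case escapes these properties.

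\emph{Step 1 (random-utility representation).} Fix a finite menu $A$ and a prior with $p_0(a)>0$ on $A$. Define the random utility
\[
u(a)=U(a)+\beta^{-1}\log p_0(a)+\beta^{-1}\varepsilon_a ,
\]
where the $\varepsilon_a$ are i.i.d.\ standard Gumbel. Then $\beta u(a)$ is Gumbel with location $\beta U(a)+\log p_0(a)$, and the noise on $u$ has scale $1/\beta$. Write $m_a=\exp(\beta U(a)+\log p_0(a))$. The Gumbel CDF gives $\mathbb P\{\beta u(b)\le x\}=\exp(-m_b e^{-x})$. Conditioning on $\beta u(a)=x$ and integrating the product of these CDFs over $b\ne a$ against the density of $\beta u(a)$ yields
\[
\mathbb P\{u(a)\ge u(b)\ \forall b\in A\}=\frac{m_a}{\sum_{b\in A}m_b}.
\]
This is exactly the normalised Gibbs weight~\eqref{eq:gibbs}. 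Ties occur with probability zero, because the noise has a continuous distribution, so the indifference-resolution rule is immaterial. The field $\rho$ is then the image law of $u$ under the map $u\mapsto\{(x,y):u(x)\ge u(y)\}$. This map is measurable into $\mathcal R$, and it lands in $\mathcal R^{\ast}$, since every realised relation is a complete preorder. Hence the Gibbs law is $P_\rho$ in the sense of Definition~\ref{def:agentspace}.

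\emph{Step 2 (stochastic for finite $\beta$, deterministic in the limit).} For finite $\beta$, every $a\in A$ with $p_0(a)>0$ receives strictly positive probability. So on any menu with at least two supported alternatives, $P_\rho(\cdot\mid A)$ is not a point mass, and the agent is not rational in the sense of the definition. As $\beta\to\infty$, dividing numerator and denominator by $\exp(\beta\max_A U)$ shows that the law converges to the uniform law on $\arg\max_A U$ restricted to the support of $p_0$. In particular, under a unique maximiser it converges to the Dirac mass, which is single-valued demand. The same computation shows the field $\rho$ converges weakly to $\delta_{\succsim_U}$.

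\emph{Step 3 (regular corner).} Regularity means that enlarging a menu never raises the choice probability of an existing alternative. It holds for the logit field because $m_a/\sum_{b\in A}m_b$ is decreasing in $A$ under inclusion. The field also satisfies Luce's IIA, since the ratio $P(a\mid A)/P(b\mid A)=m_a/m_b$ does not depend on $A$. More generally, any random-utility field supported on $\mathcal R^{\ast}$ is regular: if $a$ is maximal in $B\supseteq A$, then $a$ is maximal in $A$.

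\emph{Step 4 (non-regular fields, the main obstacle).} The key observation is that Definition~\ref{def:agentspace} lets $\rho$ charge $\mathcal R\setminus\mathcal R^{\ast}$. Take $\mathcal X\supseteq\{x,y,z\}$ and let $\rho$ put mass one on the cyclic relation $x\succ y\succ z\succ x$. This relation is not transitive, and so it exhibits the intransitive cycle required by the statement. Maximality within $A$ is computed separately for each menu: the maximal element of $\{x,y\}$ is $x$, while $\{x,y,z\}$ has no maximal element, so the indifference-resolution rule decides.

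Choosing a resolution rule that is allowed to depend on the menu then produces a reversal. Have it select $y$ from $\{x,y,z\}$. Then $P(y\mid\{x,y\})=0<1=P(y\mid\{x,y,z\})$, which violates regularity and is a context-dependent reversal. The step needing care is making sure this example respects the measurability of the resolution rule, and that is trivial on finite menus.
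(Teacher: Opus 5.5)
Your core argument is the paper's own proof. Steps 1--2 use the Gumbel max-stability (logit) identity with the prior entering as the offset $\beta^{-1}\log p_0$, strict positivity of every Gibbs weight at finite $\beta$, and concentration on $\arg\max_A U$ as $\beta\to\infty$. The paper also rederives the Gibbs law from the free-energy problem by strict concavity and a Lagrange multiplier, which the statement as phrased does not need. Steps 3--4 do more than the paper. The paper only asserts that the logit field is the ``regular, transitive corner'' and that a general field charges intransitive relations ``to which no additive-noise representation need apply''. Your explicit monotonicity check and non-regular witness are an improvement, but three details need repair.

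First, when $\arg\max_A U$ has several elements, the limit of the Gibbs law is the prior conditioned on that set, $p_0(a)\,\one\{a\in\arg\max_A U\}/p_0(\arg\max_A U)$, not the uniform law.

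Second, $\rho_\beta\to\delta_{\succsim_U}$ requires $U$ to be injective on $A$, not merely to have a unique maximiser. Since $u_\beta$ is almost surely injective, $\rho_\beta(\{\succsim_U\})=0$ for every $\beta$ whenever $U$ has a tie. Two tied alternatives $b,c$ are ordered by the sign of $\log p_0(b)-\log p_0(c)+\varepsilon_b-\varepsilon_c$, whatever $\beta$ is. In the injective case the claim follows from $u_\beta\to U$ pointwise, not from the argmax computation.

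Third, Steps 3 and 4 use incompatible readings of the resolution rule. The inclusion you cite (maximal in $B$ implies maximal in $A\subseteq B$) holds for every relation, cyclic or not. So the literal formula of Definition~\ref{def:agentspace} is monotone for every field, and any violation of regularity must come from the rule. But if the rule is ``allowed to depend on the menu'' arbitrarily, a rational preorder already breaks regularity. Take the point mass on universal indifference, with the rule choosing $x$ from $\{x,y\}$ and $y$ from $\{x,y,z\}$. This contradicts your general claim in Step 3, and it means your witness does not trace the violation to the cycle.

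The repair is to state explicitly that the rule also handles menus with no maximal element (literally the formula yields only a sub-probability on $\{x,y,z\}$). Then fix one menu-consistent convention: resolve ties and empty maximal sets either uniformly or by a fixed linear order on $\mathcal X$ that ranks $y$ first. Write $M(A)$ for the maximal set. Weak orders then give regular choice, because $M(A)=M(B)\cap A$ whenever the right side is non-empty. Your cycle still gives $P(y\mid\{x,y,z\})>0=P(y\mid\{x,y\})$ (namely $1/3$ or $1$), so the violation is now genuinely the cycle's.
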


\begin{proof}[Proof sketch]
That $q^{\ast}(a)\propto p_0(a)e^{\beta U(a)}$ is precisely the probability that $a$ maximises $U(\cdot)+\varepsilon(\cdot)$ when the $\varepsilon(b)$ are independent Gumbel variates of scale $1/\beta$, the prior entering as an additive log-offset, is the logit identity \citep{mcfadden1974}; the variational derivation of $q^{\ast}$ from~\eqref{eq:freeenergy} is standard. Regularity of the logit field and the existence of random-utility representations for the regular stochastic choice functions are the substance of the stochastic-choice literature \citep{thurstone1927,luce1959}.
\end{proof}

\begin{proposition}[Rationality is a vanishing slice]
\label{prop:rare}
On a set of $m$ alternatives there are $2^{m(m-1)}$ relations on the ordered off-diagonal pairs, of which the complete and transitive ones, the weak orders, number the ordered Bell number $a_m=\sum_{k\ge1}k!\,S(m,k)$, with $S(m,k)$ the Stirling numbers of the second kind. The ratio $a_m/2^{m(m-1)}$ tends to zero as $m\to\infty$. The rational preorders are therefore an asymptotically negligible fraction of the relations a preference might instantiate.
\end{proposition}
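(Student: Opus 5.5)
The plan has two parts. First I count the relations, then I bound the ordered Bell numbers by something much smaller than $2^{m(m-1)}$.

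\emph{The relation count.} A binary relation on $m$ alternatives is determined by which ordered pairs $(x,y)$ with $x\neq y$ it contains. The diagonal is fixed, since the rational relations are reflexive under completeness. There are $m(m-1)$ such pairs, each independently in or out, which gives $2^{m(m-1)}$ relations.

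\emph{The weak-order count.} A complete and transitive relation is the same thing as an ordered partition of the $m$ alternatives into indifference classes. One partitions the set into $k$ nonempty blocks in $S(m,k)$ ways, then orders the blocks linearly in $k!$ ways. Summing over $k$ gives $a_m=\sum_{k\ge1}k!\,S(m,k)$. The correspondence is a bijection: indifference is an equivalence relation by transitivity and completeness, and strict preference linearly orders its classes.

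\emph{The bound.} A crude estimate is enough. Each weak order is determined by a map $\mathcal X\to\{1,\dots,m\}$ sending each alternative to the rank of its class. Hence
\begin{equation*}
a_m\le m^m = 2^{m\log_2 m}.
\end{equation*}
Therefore
\begin{equation*}
\frac{a_m}{2^{m(m-1)}}\le 2^{\,m\log_2 m-m(m-1)}=2^{-m\,(m-1-\log_2 m)},
\end{equation*}
and the exponent tends to $-\infty$ because $m-1-\log_2 m\to\infty$.

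If a sharper statement is wanted, one can cite the asymptotic $a_m\sim m!/\bigl(2(\ln 2)^{m+1}\bigr)$. This shows $\log a_m=O(m\log m)$ against $\Theta(m^2)$, so the ratio decays superexponentially. The crude bound $m^m$ already suffices, however.

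The only step needing care is the bijection between weak orders and ordered set partitions. In particular, one must check that the counting convention treats the reflexive diagonal as fixed, so that the numerator and denominator count the same kind of object. Including the diagonal would only multiply the denominator by $2^m$ and strengthen the conclusion.
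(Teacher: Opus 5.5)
Your proposal is correct and follows the paper's proof: the same count of $2^{m(m-1)}$ relations on the off-diagonal pairs, the same bijection between weak orders and ordered partitions, and the same comparison of $\log a_m=O(m\log m)$ against $\Theta(m^2)$. The only difference is the crude bound: you use the rank-map injection $a_m\le m^m$, while the paper uses $k!\le m!$ and $B_m\le m!$ to get $a_m\le m!\,B_m\le (m!)^2$; your bound is slightly more direct and gives the explicit rate $2^{-m(m-1-\log_2 m)}$, but either suffices.
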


\begin{proof}[Proof sketch]
For large $m$, $a_m\le m!\,2^{m}$, so $\log a_m=O(m\log m)$, while $\log 2^{m(m-1)}=\Theta(m^{2})$; hence $\log(a_m/2^{m(m-1)})\to-\infty$ and the ratio tends to zero.
\end{proof}

The fraction is computed under the uniform counting measure on the finite relation set; it is an orienting fact about how special rationality is among the logically possible preferences, not a claim that economically generated fields are uniformly distributed. The economically load-bearing statement is the behaviour-freedom of the recovery, to which we now turn.

\begin{proposition}[The recovered structure is behaviour-free]
\label{prop:behaviourfree}
For any population of agents specified as random preference fields, rational or not, the realised aggregate process is well defined, and when it is asymptotically mean stationary it possesses an entropy rate and a dependence operator recovered exactly as in the adaptive primitive, at the resolution certified by Principle~\ref{p:gate}. No agent need be complete, transitive, or deterministic for the recovery to proceed. The Walrasian construction of the canon, by contrast, requires each agent's preference to carry enough structure to possess a maximal element on the budget set, in its standard form completeness and transitivity, merely to define individual demand, hence aggregate excess demand, hence the fixed point whose existence is at issue.
\end{proposition}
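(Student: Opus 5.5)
The plan is to prove the three assertions of Proposition~\ref{prop:behaviourfree} in turn: the aggregate process is well defined, the recovery goes through, and the Walrasian construction needs structure that the recovery does not. The guiding observation is that every object in the recovery is a functional of the law of the realised path alone. No agent's preference enters except through the stochastic choice function it induces.

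\emph{Well-definedness.} I treat each agent, as in Definition~\ref{def:agentspace}, as a probability measure $\rho_i$ on $\mathcal R$ together with a measurable rule resolving indifference. On each finite menu this yields a stochastic choice kernel $P_{\rho_i}(\cdot\mid A)$. The first task is to check that the kernel is measurable in the menu, and hence in the state of the economy that determines the menu.
\begin{itemize}
\item For fixed $a$ and $A$, the set $\{\succsim : a\ \text{is}\ \succsim\text{-maximal in}\ A\}$ is a finite intersection of sets of the form $\{\succsim : a\succsim b\}$.
\item Since $A$ is finite, each such set is Borel in $\mathcal R$.
\item Composing these with the measurable tie-breaking rule gives a Markov kernel from histories to actions.
\end{itemize}
Completeness and transitivity play no role here. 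If a menu has no $\succsim$-maximal element, as happens under a cycle, the tie-breaking rule assigns the choice, so I will read the rule as also resolving emptiness of the maximal set. I will state this convention explicitly. The one genuine restriction the proof needs is that the rule be defined on the whole of $\mathcal R$.

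With the agents' kernels in hand, I would combine them with the economy's state-transition kernel by the Ionescu--Tulcea theorem. This produces a unique law $\mathbb P$ on the path space, and the aggregate process $X$ is the coordinate process under $\mathbb P$, carried by the shift $\tau$.

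\emph{Recovery.} Under the stated hypothesis that $X$ is AMS, the stationary mean $\bar\mu$ exists. The entropy rate~\eqref{eq:rate} then exists by the subadditivity argument already cited for AMS sources. The Hilbert space $L^2_0(\bar\mu)$ and its covariance are determined by $\bar\mu$, so the identified content of $\T$ is determined by $\bar\mu$ as well. This covers the self-adjoint part, the spectrum and leading eigenpair, the resolvent up to calibration, and the signal-to-noise threshold, all by Proposition~\ref{prop:identification}.

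Since $\bar\mu$ depends on the agents only through $\mathbb P$, the recovery inherits no assumption on $\rho_i$. I would make this a lemma: any two populations inducing the same path law have the same recovered objects. Rational populations are then one instance among many, and so are logit populations, via Proposition~\ref{prop:randomutility}.

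\emph{Contrast with the canon.} For the Walrasian clause, I will exhibit a three-alternative cyclic relation $a\succ b\succ c\succ a$ on a budget set containing all three alternatives. It has no maximal element, so individual demand is undefined, and with it aggregate excess demand. To show the structural requirement is substantive, I would recall the standard fact that a complete, transitive, continuous preference on a compact budget set has a maximal element. I would also note that Proposition~\ref{prop:rare} shows that the relations lacking this structure are generic.

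\emph{Main obstacle.} I expect the hard step to be the measurability of the aggregation on general compact $\mathcal X$. There the map $\succsim\mapsto$ maximal set must be measurable for the Borel structure on $\mathcal R$, and that structure is not specified in the paper. I would first restrict to finite menus, as Definition~\ref{def:agentspace} does, where measurability is immediate. If needed, I would then fall back on the random-utility representation: take $u$ to be a measurable random function, so that $\arg\max_A u$ is measurable by standard measurable-selection arguments.
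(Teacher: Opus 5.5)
Your proposal is correct and follows the paper's own argument. The path law exists whatever the fields $\rho_i$, every recovered object is a functional of that law through its stationary mean $\bar\mu$, and Walrasian demand is by definition a maximal element of the budget set. Your additions only make rigorous what the paper asserts: the Ionescu--Tulcea construction, the path-law invariance lemma, and the explicit convention that the rule also resolves an empty maximal set. That convention patches a real lacuna in the definition, which the paper's ``by construction'' passes over.

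Two cautions. First, drop the random-utility fallback: a random function $u$ induces only complete and transitive realised relations, so it would forfeit exactly the generality the proposition claims. Second, on the Walrasian side the paper's sharper point is that even the existence theories that drop completeness and transitivity (Mas-Colell; Shafer and Sonnenschein) retain budget-set maximisation. The dividing line is therefore the maximisation step itself, not transitivity, which your three-cycle example taken alone might suggest.
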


\begin{proof}[Proof sketch]
The realised path $X$ is the record of the agents' choices whatever their fields $\rho$, and is a stochastic process irrespective of completeness or transitivity; asymptotic mean stationarity is a property of that process, not of the agents, and where it holds the entropy rate~\eqref{eq:rate} and the dependence operator are read from the second-order comovement of the stationary mean and certified by Principle~\ref{p:gate}. The Walrasian demand of an agent is its maximal bundle in the budget set, defined only when the preference yields a maximal element; the existence results that relax completeness and transitivity \citep{mascolell1974,shafersonnenschein1975} retain that budget-set optimisation, which the adaptive primitive discards entirely. The recovery uses none of it.
\end{proof}

Two consequences follow. The first is that the theory is indifferent to the substrate of choice: a human deciding under the documented limits of attention and memory, by heuristic and under bias \citep{simon1955,tverskykahneman1974,kahnemantversky1979}, with valuations that are reference-dependent, and a machine executing a stochastic or exploratory policy, are to the adaptive primitive the same object, a random preference field whose realised choices enter the path. The intransitive cycles that careful experiment exhibits \citep{tversky1969} and the incompleteness of a preference never asked the question \citep{aumann1962,bewley2002} are not anomalies to be assumed away en route to a demand function; they are the shape of the field $\rho$, the content the theory reads rather than the friction it removes. The second consequence is methodological and the more far-reaching. The path a single mind takes through its moods and frames is, at the level of the individual, very nearly unpredictable; yet the aggregate those paths jointly generate has, by Proposition~\ref{prop:behaviourfree}, an identified entropy rate and an identified dependence operator. The indeterminacy of the atom is compatible with the legibility of the whole. Where the canon began with an axiom on the agent and built upward to the market, the adaptive order begins with the realised process of the market and recovers what the data of the whole will support, indifferent to whether the atoms beneath it were rational at all.

\section{The general nesting theorem}
\label{sec:nesting}

We now state the central result: the equilibrium canon is the corner of the adaptive order at which five conditions hold together. Three are parameter limits and two are fixed-point conditions, of a different kind, and the difference is marked in the proof. The limit is taken along an explicit scaling path, so that the joint object is well defined and does not depend on the order in which the parameters are sent to their extremes.

\begin{definition}[Admissible scaling path]
\label{def:path}
An \emph{admissible scaling path} is a family, indexed by $t\in\mathbb N$, of channel capacities $\beta(t)$, selection intensities $\sigma(t)$, and observational innovation scales $\eta(t)$, together with a learning-gain sequence $\gamma(t)$, such that $\beta(t)\uparrow\infty$ and $\sigma(t)\uparrow\infty$, $\eta(t)\downarrow0$, and the gain satisfies the Robbins--Monro condition $\sum_t\gamma(t)=\infty$ and $\sum_t\gamma(t)^2<\infty$. The joint limit of Theorem~\ref{thm:nesting} is the limit $t\to\infty$ taken along such a path.
\end{definition}

\paragraph{Standing assumptions.} The nesting holds under four side conditions, named here so that the theorem is conditional on them in the open and the worked example can exhibit a class in which they provably hold:
\begin{enumerate}[label=\emph{(A\arabic*)},leftmargin=2.6em]
\item \emph{No persistent cycle.} The economy's deterministic skeleton admits a strict Lyapunov function, so that a vanishing entropy rate forces a rest state rather than a periodic orbit.
\item \emph{A unique attracting structure.} The selection dynamics of Principle~\ref{p:coevol} have a unique, globally attracting evolutionarily stable structure.
\item \emph{Learnability.} The adaptive learning rule is locally convergent at the expectationally stable fixed point, under the Robbins--Monro gain of Definition~\ref{def:path}.
\item \emph{Structural fixity.} In the limiting economy the recovered operator no longer co-evolves with the strategies that act upon it.
\end{enumerate}
The first three are genuine restrictions; in a general general-equilibrium setting global stability is not free, since by Walras' law the natural quadratic is not a strict Lyapunov function, the obstruction recorded by the Sonnenschein--Mantel--Debreu results. The worked example of Section~\ref{sec:worked} exhibits a class in which (A1) provably holds.

\begin{theorem}[The general nesting theorem]
\label{thm:nesting}
Let an adaptive economy in the sense of Principle~\ref{p:prim} be carried along an admissible scaling path (Definition~\ref{def:path}) to the limit $t\to\infty$, under the standing assumptions (A1)--(A4). Then, as $t\to\infty$, the three parameter limits
\begin{enumerate}[label=\emph{(\roman*)},leftmargin=2.4em]
\item its entropy production and its entropy rate vanish, so that, by (A1), the source becomes a stationary deterministic rest state;
\item every agent's channel capacity diverges, so that the Gibbs law~\eqref{eq:gibbs} concentrates, by Proposition~\ref{prop:randomutility}, on a point mass at the value maximiser, and the random preference field collapses to a single rational preorder;
\item the selection intensity of the population dynamics of Principle~\ref{p:coevol} diverges, so that the population concentrates, by (A2), on its unique evolutionarily stable structure,
\end{enumerate}
hold together with the two fixed-point conditions
\begin{enumerate}[label=\emph{(\roman*)},leftmargin=2.4em,start=4]
\item the adaptive learning of Principle~\ref{p:coevol} has converged, by (A3), to its expectationally stable fixed point, so that beliefs are consistent with the realised law of motion and expectations are rational; and
\item by (A4) the recovered operator has ceased to co-evolve with the strategies that act upon it, so that the dependence structure is fixed and the economy is posed on a stationary topology.
\end{enumerate}
Then the limiting object is a competitive economy with rational expectations on a fixed structure in the sense of the canon, its rest state is a Walrasian equilibrium, and the axioms of the canon hold there. Conditions (iv) and (v) are not consequences of (i)--(iii): a zero-rate deterministic rest state can still carry a co-evolving topology, so structural fixity is an independent restriction, and rational expectations is a property of beliefs that the three parameter limits do not by themselves deliver. Away from the joint limit the adaptive economy is a strict generalisation: it carries a positive entropy rate and a recovered, co-evolving dependence operator, content that the equilibrium primitive cannot express.
\end{theorem}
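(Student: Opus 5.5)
The proof proceeds clause by clause along the admissible scaling path, treating the three parameter limits as continuity statements in the parameters $(\beta,\sigma,\eta)$ and the two fixed-point conditions as convergence statements for the learning and co-evolution dynamics. The limiting object is then assembled from the five pieces and checked against the defining properties of a competitive rational-expectations equilibrium.

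For (i), I would argue that the observational innovation scale $\eta(t)$ bounds the conditional entropy of $X_0$ given the past. As $\eta(t)\downarrow0$, the block entropies satisfy $\entropy(X_0,\dots,X_{n-1})/n\to0$ uniformly enough that $\entrate\to0$; this uses subadditivity and the AMS limit of \eqref{eq:rate}. A zero-rate AMS source is asymptotically determined by its past, so its stationary mean is carried by the deterministic skeleton. By (A1) the strict Lyapunov function rules out periodic orbits and forces the support onto the rest set. For (ii), I would apply Proposition~\ref{prop:randomutility}. The Gibbs law $q^\ast$ of \eqref{eq:gibbs} converges weakly, as $\beta(t)\to\infty$, to a point mass on $\arg\max U$ (uniform over ties under the indifference rule, with $p_0$ having full support). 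The induced random preference field then converges to $\delta_{\succsim^\ast}$, with $\succsim^\ast$ the weak order represented by $U$. For (iii), I would use a Laplace-type concentration argument for the selection dynamics at intensity $\sigma(t)\to\infty$: the population mass concentrates on the fitness-maximising set, which by (A2) is the unique globally attracting evolutionarily stable structure.

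For (iv), I would invoke the ODE method of stochastic approximation. Under the Robbins--Monro gain of Definition~\ref{def:path}, the learning recursion tracks its mean ODE, and (A3) gives local convergence to the E-stable fixed point. At that fixed point the perceived law of motion equals the actual law of motion, which is exactly rational expectations. For (v), (A4) directly freezes $\T$, so that the limiting economy is posed on a fixed structure. I would then assemble the pieces. Agents are deterministic maximisers of a rational preorder over their feasible sets, beliefs are correct, the structure is fixed, and the state is a rest point. At a rest point of the price/allocation skeleton, excess demand vanishes, since otherwise the Lyapunov function strictly decreases. The rest state is therefore a Walrasian equilibrium, and the canon's axioms hold for the limiting preferences.

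The independence claim I would establish by two examples. The first is a deterministic zero-rate economy with $\beta,\sigma=\infty$ whose operator is a nonconstant function of the strategy profile along an approach to the rest state; this satisfies (i)--(iii) but not (v). The second is a misspecified learning rule converging to a restricted-perceptions fixed point; this satisfies (i)--(iii) but not (iv). Strictness away from the limit follows because any $\eta>0$ with a nondegenerate innovation gives $\entrate>0$, which no point-mass rest state can carry.

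The main obstacle I expect is the joint, order-independent nature of the limit. The convergence in (ii)--(iii) is in the parameters, while (iv) is in learning time, and the two run on the same index $t$. I would handle this by a two-timescale argument: I would require the learning iterates to be asymptotically slaved to the parameter-$t$ mean field. Concretely, I would need that the E-stable fixed point depends continuously on $(\beta,\sigma,\eta)$ near the corner, so that a moving-target stochastic approximation result, with the drift of the target summable against $\gamma(t)$, yields convergence to the corner's fixed point. If that continuity fails, the theorem would have to be weakened to the iterated limit.
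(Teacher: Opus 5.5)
Your outline has the same skeleton as the paper's appendix proof. It treats (i)--(v) clause by clause, then does a conjunction step, then strictness. It uses the same ingredient at each step: the strict-Lyapunov hypothesis to exclude zero-rate cycles, Laplace concentration of the Gibbs law, logit concentration on the fitness maximiser, the E-stable fixed point read as rational expectations, and (A4) read as fixity.

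Where you differ is in what is taken as given. The paper imposes the five conditions and only traces their consequences. It does not derive $\entrate\to0$ from $\eta(t)$, and it treats (iv) as holding at the corner, a nesting ``by fixed point rather than by limit''. So no stochastic-approximation or timescale argument appears. Order-independence is handled separately in Lemma~\ref{lem:pathindep}, by Moore--Osgood over $(\beta,\sigma,\eta)$ alone, and the gain $\gamma(t)$ plays no role there.

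You instead derive (i) and (iv) along the path. That yields a genuinely joint statement, and it correctly exposes that learning time and the parameter index share the same $t$, which the paper never confronts. The price is a hypothesis outside (A1)--(A4): continuity of the E-stable fixed point in $(\beta,\sigma,\eta)$ near the corner, with the target's drift small relative to $\gamma(t)$. Your fallback, the iterated reading, is precisely what the paper's proof establishes.

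Your explicit independence examples go beyond the paper's one-line reasons. For the first one, note that an operator that is merely a function of the profile stops moving once the profile is at rest. It therefore violates (v) only if co-evolution is read as endogeneity rather than motion.

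One step should be stated as an assumption rather than derived. ``Excess demand vanishes at a rest point, since otherwise the Lyapunov function strictly decreases'' is circular under (A1). That hypothesis makes the Lyapunov function strict only off the rest set of the skeleton, so it certifies that the limit is a rest point, not that rest points clear markets. The paper does not derive market clearing either: its conjunction step builds it into the rest state, and the worked example obtains it by identifying each best response with price-taking demand. State it as a property of the skeleton.

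Two smaller corrections.
\begin{itemize}
\item Under ties, the limiting Gibbs law is $p_0$-weighted on $\arg\max U$, not uniform. Nor does the random field collapse to $\delta_{\succsim^{\ast}}$: the Gumbel ranking among tied alternatives has a nondegenerate law that does not depend on $\beta$. You need $U$ injective on the menu; the paper assumes at least a unique maximiser.
\item $\entrate\to0$ requires the rate to be read at a fixed resolution, as in \eqref{eq:wkrate}. The differential rate of a continuous source tends to $-\infty$, not to $0$, as the innovation scale vanishes.
\end{itemize}
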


\begin{lemma}[Path-independence of the joint limit]
\label{lem:pathindep}
Under the standing assumptions, the limiting object of Theorem~\ref{thm:nesting} is the same for every admissible scaling path; it does not depend on the relative rates of $\beta(t)$, $\sigma(t)$, and $\eta(t)$. The three parameter limits commute, and the iterated limits in any order agree with the joint limit taken along the path.
\end{lemma}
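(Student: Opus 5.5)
The plan is to show that each of the three parameter limits drives its own component of the state to a limit point that is characterised without reference to the other two parameters. Once that is in hand, the joint object is a product of separately determined limits, and the order or relative rate of approach cannot matter. I will split the limiting object into three components. The first is the agents' choice laws, indexed by $\beta$. The second is the population state, indexed by $\sigma$. The third is the source law together with its entropy rate, indexed by $\eta$. The learning fixed point and the fixed operator, conditions (iv) and (v) of Theorem~\ref{thm:nesting}, are not parameter limits. By (A3) and (A4) they are determined once the other components are fixed, so they add no further dependence on the path.

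\emph{Step 1 (agents).} By Proposition~\ref{prop:randomutility}, the Gibbs law $q^{\ast}_\beta(a)\propto p_0(a)e^{\beta U(a)}$ converges to the point mass on the maximiser of $U$ as $\beta\to\infty$. I will show that this convergence is uniform over the admissible values of the other parameters. For this I need $U$ to vary continuously in $(\sigma,\eta)$ and to have a uniform gap $\delta>0$ between the best and second-best actions. Given such a gap, the mass on non-maximisers is at most $|\mathcal A|\max_a p_0(a)/p_0(a^\ast)\,e^{-\beta\delta}$, which does not depend on $\sigma$ or $\eta$.

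\emph{Step 2 (selection).} For the population dynamics at intensity $\sigma$, (A2) says that the unique ESS is globally attracting. I will turn the strict Lyapunov function associated with this ESS into a uniform estimate. The target is that the perturbed rest points of the $\sigma$-dynamics (for example, logit-quantal rest points) lie within $O(1/\sigma)$ of the ESS, uniformly in $\beta$ and $\eta$. The tool is the implicit function theorem at a regular ESS.

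\emph{Step 3 (source).} As $\eta\to0$ the innovations vanish, so the block entropies, and hence $\entrate$, go to zero. (A1) then forces the skeleton onto the rest state. I will prove that the Lyapunov decrease gives convergence whose constants do not depend on $\beta$ or $\sigma$.

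With these three uniform statements, the map $(\beta^{-1},\sigma^{-1},\eta)\mapsto$ (limiting object) extends continuously to the corner $(0,0,0)$. The Moore--Osgood theorem then gives that the iterated limits exist, commute, and equal the joint limit along any admissible path. The Robbins--Monro gain only controls how fast the learning converges to its fixed point. Under (A3) that fixed point is a continuous function of the frozen state, so it too passes to the same corner value.

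The main obstacle is the uniformity in Steps 1 and 2, because the components are coupled. The value $U$ that the agents face depends on the population and the structure, and the selection fitness depends on the agents' choices. A uniform gap or a uniform regularity condition can fail along a path where, say, $\sigma(t)$ grows much faster than $\beta(t)$. My first attempt would be a uniform contraction argument. I would define a composite fixed-point map on the product state space, use (A2)–(A3) to show that it is a contraction near the corner with a modulus bounded away from one for all large parameter values, and then read off continuity of the fixed point at the corner. If the contraction fails, I would fall back on upper hemicontinuity of the fixed-point correspondence, combined with the uniqueness in (A2). That combination still yields a single limit point, and hence path-independence.
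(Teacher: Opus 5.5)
Your proposal follows the paper's own argument. The paper likewise assigns $\beta$, $\sigma$, $\eta$ to separate components, asserts that each single-parameter limit is attained uniformly in the other two (Laplace concentration of the Gibbs law, logit concentration of the shares, the quantised innovation falling in the zero cell), and concludes by Moore--Osgood; the one difference is that it simply declares the coordinates separate. That declaration holds in the quadratic example, where the Gibbs mean is the best response at every $\beta$ and the fitnesses $V_s$ are evaluated at deterministic rest points, so the shares depend on $\sigma$ alone. You instead rightly flag the coupling and propose a uniform-contraction argument for it, which goes beyond what the paper does.

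One correction to your upper-hemicontinuity fallback. It yields a single limit point only if the corner rest state on the selected structure is unique, and (A2) does not give that: it makes only the structure unique, and multiple rest states are exactly what Sonnenschein--Mantel--Debreu permits. The uniqueness has to come from reading (A1) as a unique globally attracting rest state, as in the example, where $\phi\,\lambda_{\max}(W)<1$ pins down $a^{\ast}=\alpha(I-\phi W)^{-1}\one$.
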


\begin{proof}[Proof sketch]
Conditions (i)--(iii) are three limits in three parameters of three objects. By the source theory, an asymptotically mean stationary source whose entropy rate tends to zero collapses, provided the strict-Lyapunov hypothesis holds so that a vanishing rate is not consistent with a persistent cycle, to a stationary deterministic rest state; the hypothesis is named because a zero rate alone is necessary but not sufficient for degeneracy. By Principle~\ref{p:agent} and the variational characterisation of~\eqref{eq:gibbs}, as capacity diverges the Gibbs law concentrates on the value-maximising action, so each bounded agent becomes the unconstrained maximiser and, by Proposition~\ref{prop:randomutility}, its random preference field collapses to a point mass on a single rational preorder. By Principle~\ref{p:coevol} and the evolutionary-stability criterion, as the selection intensity diverges the population concentrates on its unique evolutionarily stable structure, so the strategic configuration is pinned down. Conditions (iv) and (v) are of a different kind. When adaptive learning has reached its expectationally stable fixed point, the beliefs on which agents act coincide with the law of motion their actions generate: this is the rational-expectations hypothesis, recovered not as a parameter sent to a limit but as the rest point of a convergent dynamics, a nesting by fixed point rather than by limit. Condition (v) is the standing restriction that the recovered operator no longer co-evolves with the strategies that act upon it, so that the equilibrium is posed, as the canon poses it, on a fixed structure; it is not a limit to be proved but a restriction defining the class. Taken together, the limiting economy satisfies the hypotheses of the canon: its agents are frictionless maximisers holding rational expectations, its strategic configuration is determinate, its structure is fixed, and its state is a determinate rest point satisfying the market-clearing conditions, which is to say a Walrasian equilibrium. Strictness of the generalisation is immediate: away from the limit the entropy rate is positive and the operator is non-degenerate and co-evolving, and none of these has any expression in the equilibrium primitive. The full proof is given in the appendix.
\end{proof}

The correspondence with the main results of the canon is exact, result by result.

\begin{enumerate}[label=\emph{(\alph*)},leftmargin=2.4em]
\item \emph{Existence is recovered.} The Walrasian equilibrium whose existence the fixed-point argument secures \citep{arrowdebreu1954,mckenzie1954} is exactly the rest state of the limiting source. The existence theorem becomes, in the adaptive reading, the statement that this limit is non-empty, and it is inherited without loss.
\item \emph{The Sonnenschein--Mantel--Debreu indeterminacy is a property of the limit, not inherited.} The indeterminacy results \citep{sonnenschein1972,mantel1974,debreu1974} are statements about the frictionless aggregate, in which the only structure is the three properties of excess demand and rationality supplies nothing further; that is why, in the canon, the equilibrium can be a continuum and the adjustment can fail to converge. The adaptive economy away from the limit is not frictionless, and by the certification gate the recovered operator supplies, at its firmly identified resolution, structure that rationality alone does not: the self-adjoint spectrum and the leading eigenvector fix the leading direction of contemporaneous dependence and the ranking of systemic influence (Proposition~\ref{prop:identification}). This does not repeal the Sonnenschein--Mantel--Debreu result, which is a statement about the frictionless excess-demand correspondence $Z(p)$ and not about the comovement operator; it locates the indeterminacy as a feature of the zero-friction corner, which the economy away from the limit is not. No determinacy is read off the recovered spectrum, in keeping with Principle~\ref{p:gate}.
\item \emph{The regular-economies recovery has an adaptive counterpart.} The generic local uniqueness of the differentiable approach \citep{debreu1970} corresponds to the local determinacy of the adaptive order, which is generic in the same sense; but, in keeping with the caution of Principle~\ref{p:gate}, that determinacy is established through the eigenvalue count on the policy map, with expectational stability a separate criterion of learnability, and never read from the recovered spectral radius itself.
\end{enumerate}

\section{A worked example: a symmetric quadratic network economy}
\label{sec:worked}

The nesting theorem of the previous section is a statement about limits. This
section makes it concrete in a single economy small enough to carry every
quantity in closed form, yet rich enough to exhibit all five conditions of
Theorem~\ref{thm:nesting} at once. The vehicle is a symmetric three-agent
quadratic network game. It has a unique interior rest point that coincides, at
the joint limit, with the competitive equilibrium computed directly; a bounded
agent whose Gibbs choice is an explicit Gaussian; a population selection
dynamics whose infinite-intensity limit concentrates on a unique evolutionarily
stable structure; a source whose entropy rate is available in closed form and
falls to zero as the frictions vanish; and a recovered dependence operator that
is a $3\times 3$ circulant, for which the firmly identified objects (the
spectrum and the leading eigenvector) and an explicit pair of observationally
equivalent structures (same spectrum, same leading eigenvector, different
individual edges) can both be written down. Throughout, $n=3$.\footnote{All numerical values reported in this section are elementary and follow from the closed-form expressions given; replication materials accompany the paper.}

\subsection*{The economy and its competitive rest point}

Three agents indexed $i\in\{1,2,3\}$ each choose a scalar action
$a_i\in\R$. Agent $i$'s payoff is the standard linear-quadratic form with
strategic complementarities along a symmetric interaction matrix $W$,
\begin{equation}
\label{eq:payoff}
u_i(a_i,a_{-i})
=\alpha\,a_i-\tfrac12 a_i^{2}+\phi\,a_i\sum_{j} W_{ij}\,a_j,
\qquad
W=\begin{pmatrix}0&1&1\\1&0&1\\1&1&0\end{pmatrix},
\end{equation}
with $\alpha>0$ a common stand-alone return and $\phi>0$ the interaction
intensity. The matrix $W$ is the complete graph on three nodes: each agent's
return rises with the actions of the other two. The first-order condition
$\partial u_i/\partial a_i=0$ gives the best response
\begin{equation}
\label{eq:br}
a_i=\alpha+\phi\sum_j W_{ij}\,a_j,
\qquad\text{equivalently}\qquad
a=\alpha\one+\phi W a,
\end{equation}
so the unique interior Nash point, which in this linear-quadratic network we read
as the competitive rest point by identifying each agent's best response with its
price-taking demand at the prevailing field, an interpretation specific to this
setting and not a general identification of Nash with Walrasian equilibrium, is
\begin{equation}
\label{eq:nash}
a^{\ast}=\alpha\,(I-\phi W)^{-1}\one,
\end{equation}
provided the contraction condition $\phi\,\lambda_{\max}(W)<1$ holds. Since
$W\one=(n-1)\one=2\one$, symmetry collapses \eqref{eq:nash} to the scalar
\begin{equation}
\label{eq:nashclosed}
a^{\ast}_i=\frac{\alpha}{1-\phi(n-1)},\qquad i=1,2,3.
\end{equation}
We fix $\alpha=1$ and $\phi=0.2$ for the numerical instance. Then
$\phi\,\lambda_{\max}(W)=0.2\times 2=0.4<1$, the condition holds, and
$a^{\ast}_i=1/(1-0.4)=5/3\approx 1.6667$ for every agent. This is the number the
joint limit must reproduce.

\subsection*{(a) The bounded agent: the Gibbs choice and its frictionless limit}

A bounded agent of capacity $\beta$ does not solve \eqref{eq:br}; it draws its
action from the Gibbs law $q^{\ast}(a)\propto p_0(a)\,e^{\beta U_i(a)}$ of
Principle~\ref{p:agent}, where $U_i$ is the value of action $a$ given the agent's
belief $m=(m_1,m_2,m_3)$ about the others' actions. Writing the field the agent
faces as $z_i=\phi\sum_j W_{ij}m_j$, the value function is, up to an additive
constant,
\begin{equation}
\label{eq:value}
U_i(a)=(\alpha+z_i)\,a-\tfrac12 a^{2}
=-\tfrac12\big(a-(\alpha+z_i)\big)^{2}+\text{const}.
\end{equation}
Take a flat (improper-uniform) prior $p_0$ over $\R$, the maximum-entropy prior
on the line. The Gibbs density is then the Gaussian
\begin{equation}
\label{eq:gauss}
q^{\ast}(a)=\sqrt{\tfrac{\beta}{2\pi}}\;
\exp\!\Big(-\tfrac{\beta}{2}\big(a-(\alpha+z_i)\big)^{2}\Big)
=\mathcal N\!\Big(\alpha+z_i,\ \tfrac{1}{\beta}\Big),
\end{equation}
the integral $\int_{\R}e^{\beta U_i}\,da=\sqrt{2\pi/\beta}\,e^{\beta\cdot\text{const}}$
being elementary. Two facts are immediate and exact. First, the \emph{mean} of
the bounded choice is $\E[a_i]=\alpha+z_i$, which is precisely the best response
\eqref{eq:br}: at any belief the bounded agent's average action is the
value-maximising action, for every finite $\beta$. Second, the \emph{dispersion}
is $\operatorname{Var}[a_i]=1/\beta$. Capacity controls only the noise around the
optimum, not its location. As $\beta\to\infty$ the variance vanishes and
\eqref{eq:gauss} concentrates, weakly, on the point mass at the best response,
\begin{equation}
\label{eq:agentlimit}
q^{\ast}\ \xrightarrow{\ \beta\to\infty\ }\ \delta_{\,\alpha+z_i},
\end{equation}
which is condition (ii) of Theorem~\ref{thm:nesting} realised in closed form:
the random preference field of the bounded agent collapses to the deterministic
demand of the canon. The quadratic payoff is what makes this exact; for a general
payoff the same limit holds by Proposition~\ref{prop:randomutility}, the Gaussian
being the special closed form the quadratic affords.

\subsection*{(b) The population: selection and the evolutionarily stable structure}

Condition (iii) concerns not the agent but the structure the population
coordinates upon. Let the population choose among a finite menu of candidate
interaction structures $\{W^{(1)},\dots,W^{(K)}\}$. For concreteness take
$K=2$, with $W^{(1)}=W$ the complete graph of \eqref{eq:payoff} and $W^{(2)}$ a
star in which agent $1$ is a hub,
\begin{equation}
\label{eq:star}
W^{(2)}=\begin{pmatrix}0&1&1\\1&0&0\\1&0&0\end{pmatrix}.
\end{equation}
Each structure induces, through \eqref{eq:nash}, its own rest point and hence a
per-capita equilibrium payoff $V_s$. Evaluating $u_i$ at the Nash point of
structure $s$ and averaging gives, after using $(W^{(s)}a^{(s)})_i=(a^{(s)}_i-\alpha)/\phi$,
the clean expression $V_s=\tfrac{1}{n}\sum_i \tfrac12 (a^{(s)}_i)^2$. At
$\alpha=1,\phi=0.2$ this yields
\begin{equation}
\label{eq:Vs}
V_{1}=V(W)\approx 1.3889,\qquad
V_{2}=V(W^{(2)})\approx 0.9531,
\end{equation}
so the complete graph is the structure of higher fitness. Let $x=(x_1,x_2)$ be
the population shares and let the structures be selected by a logit rule at
intensity $\sigma$, the population analogue of the agent's Gibbs law,
\begin{equation}
\label{eq:popselect}
x^{\ast}_s=\frac{e^{\sigma V_s}}{\sum_{r} e^{\sigma V_r}}.
\end{equation}
This is the rest point of the replicator dynamics $\dot x_s=x_s(V_s-\bar V)$ \citep{maynardsmith1982,weibull1995}
softened by mutation at rate $1/\sigma$, and since the higher-fitness structure
has $\dot x_1>0$ whenever $0<x_1<1$, the share of the complete graph rises
monotonically: the complete graph is the unique evolutionarily stable structure.
As the selection intensity diverges, \eqref{eq:popselect} concentrates on it,
\begin{equation}
\label{eq:esslimit}
x^{\ast}\ \xrightarrow{\ \sigma\to\infty\ }\ (1,0),
\end{equation}
which is condition (iii). The numerical sharpening is shown in
Table~\ref{tab:converge}: at $\sigma=4$ the population already places $0.851$ on
the complete graph, at $\sigma=16$ it places $0.999$, and the limit is a point
mass. The structure that the agents act upon is thereby pinned down, and the
recovered operator below is read on that structure.

\subsection*{(c) The source and its entropy rate}

Couple the agent and the population. At the self-consistent (rational-expectations)
belief $m=a^{\ast}$, and on the selected structure $W$, the realised economy is
the Gaussian first-order recursion in which each agent meets last period's actions
with the bounded best response \eqref{eq:gauss},
\begin{equation}
\label{eq:var}
a(t)=\alpha\one+\Phi\,a(t-1)+\varepsilon(t),
\qquad \Phi=\phi W,\qquad
\varepsilon(t)\ \overset{\text{iid}}{\sim}\ \mathcal N\!\big(0,\tfrac1\beta I\big).
\end{equation}
This is the adaptive-learning dynamic of Principle~\ref{p:coevol}: its mean
recursion $m(t)=\alpha\one+\Phi m(t-1)$ converges to the fixed point
$m^{\ast}=\alpha(I-\Phi)^{-1}\one=a^{\ast}$ precisely when
$\rho(\Phi)=\phi\,\lambda_{\max}(W)<1$, here $0.4<1$, so the rest point is
expectationally stable and condition (iv) holds. Under the same contraction the
deterministic skeleton $a\mapsto\alpha\one+\phi W a$ is a contraction towards
$a^{\ast}$, so $V(a)=\tfrac12\|a-a^{\ast}\|^2$ is a strict Lyapunov function,
strictly decreasing off the rest point, and assumption~(A1) holds throughout this
class. Conditioned on the past, $a(t)$
is Gaussian with covariance $\tfrac1\beta I$, the innovations representation, so
the source is asymptotically mean stationary with a finite entropy rate.

Because the source is continuous, its entropy rate must be read at a fixed
observational resolution $\Delta$, the discrete rate $\entrate$ of the source
theory rather than the differential rate, so that it is non-negative and falls to
a genuine zero. Quantising each innovation coordinate to a grid of width $\Delta$
gives a per-step, per-coordinate rate
\begin{equation}
\label{eq:wkrate}
\entrate_\Delta(\beta)
=-\sum_{k\in\mathbb Z}p_k(\beta)\log p_k(\beta),
\qquad
p_k(\beta)=\Phi_{0,1/\beta}\!\big((k+\tfrac12)\Delta\big)
-\Phi_{0,1/\beta}\!\big((k-\tfrac12)\Delta\big),
\end{equation}
with $\Phi_{0,1/\beta}$ the Gaussian distribution function of variance $1/\beta$.
As $\beta\to\infty$ the innovation variance $1/\beta\to 0$, the quantised
innovation falls in the zero cell with probability approaching one, and
\begin{equation}
\label{eq:ratelimit}
\entrate_\Delta(\beta)\ \xrightarrow{\ \beta\to\infty\ }\ 0,
\end{equation}
monotonically. Equivalently, the one-step forecast-error variance is exactly
$1/\beta$, so the forecastability ceiling rises to perfect prediction: this is
condition (i) and the vanishing of novelty. With $\Delta=1$ the rate falls
$1.459\to 0.860\to 0.217\to 0.0007\to 0$ as $\beta$ runs through
$1,4,16,64,\dots$ (Table~\ref{tab:converge}).

Irreversibility, the arrow of time, is carried not by the rate but by the
\emph{directed} part of the dynamics. The Gaussian recursion \eqref{eq:var} is
time-reversible if and only if $\Phi\,\Sigma$ is symmetric, where $\Sigma$ solves
the stationary equation $\Sigma=\Phi\Sigma\Phi^{\!\top}+\tfrac1\beta I$. With the
symmetric interaction $W$ of \eqref{eq:payoff} this holds exactly at every
$\beta$, so a purely symmetric economy has no arrow of time. The moment the
interaction acquires a directed component, $\Phi\Sigma$ ceases to be symmetric and
the entropy-production functional is positive, vanishing again as
$\beta\to\infty$ at rate $1/\beta$. The directed component is exactly the content
the recovered operator carries and the equilibrium limit erases, to which we now
turn.

\subsection*{(d) The recovered operator: what is identified and what is not}

Away from the limit the economy is observed only through the realised path. On
stationary differenced (that is, $I(0)$) series of the agents' actions, the
recovered dependence operator $\T$ is read from the directed second-order
comovement, never from the levels. The operator displayed here is a
representative recovered structure, chosen to exhibit the identification verdict
of Proposition~\ref{prop:identification} in closed form rather than estimated from
the particular dynamics of part~(c). For the worked instance let the recovered
operator be the $3\times 3$ circulant
\begin{equation}
\label{eq:Top}
\T=
\begin{pmatrix}
0 & c_1 & c_2\\
c_2 & 0 & c_1\\
c_1 & c_2 & 0
\end{pmatrix}
=
\begin{pmatrix}
0 & 0.5 & 0.3\\
0.3 & 0 & 0.5\\
0.5 & 0.3 & 0
\end{pmatrix},
\qquad c_1=0.5,\ c_2=0.3,
\end{equation}
a symmetric backbone of common weight $\min(c_1,c_2)$ carrying an oriented cycle
of asymmetry $c_1-c_2$. Being circulant, its spectrum is available in closed
form: with $\omega=e^{2\pi i/3}$,
\begin{equation}
\label{eq:spec}
\lambda_k=c_1\omega^{k}+c_2\omega^{2k},\quad k=0,1,2,
\qquad
\begin{cases}
\lambda_0=c_1+c_2=0.8,\\[2pt]
\lambda_{1,2}=-\tfrac{c_1+c_2}{2}\pm i\,\tfrac{\sqrt3}{2}(c_1-c_2)
=-0.4\pm 0.1732\,i.
\end{cases}
\end{equation}
The leading eigenvalue is the Perron root $\lambda_0=c_1+c_2=0.8$, real, simple,
and strictly dominant, since the subdominant magnitude is
$|\lambda_{1,2}|=\sqrt{c_1^{2}-c_1c_2+c_2^{2}}=\sqrt{0.19}\approx 0.4359<0.8$. Its
eigenvector is the uniform vector,
\begin{equation}
\label{eq:perron}
\T\,\one=(c_1+c_2)\,\one,\qquad
v_{\mathrm{lead}}=\tfrac13(1,1,1)^{\!\top}.
\end{equation}
The spectral radius $\lambda_0=0.8$, the leading eigenvalue, and the leading
eigenvector $v_{\mathrm{lead}}$ are the firmly identified objects: they fix the
dominant comovement mode and the ranking of systemic influence (here uniform, by
symmetry), and a resolvent built on them is a usable proxy. They are all the
recovery is entitled to assert at this resolution.

The individual directed edges are not identified, and the cleanest possible
witness is the transpose. Consider the reversed economy
\begin{equation}
\label{eq:Ttwin}
\T'=\T^{\!\top}=
\begin{pmatrix}
0 & 0.3 & 0.5\\
0.5 & 0 & 0.3\\
0.3 & 0.5 & 0
\end{pmatrix},
\end{equation}
in which the oriented cycle $1\!\to\!2\!\to\!3\!\to\!1$ of $\T$ is replaced by its
reversal $1\!\to\!3\!\to\!2\!\to\!1$. The two operators are observationally
equivalent in the precise sense that matters: they have the \emph{same spectrum},
$\{0.8,\,-0.4\pm 0.1732\,i\}$, because $\T$ and $\T^{\!\top}$ are always
isospectral, and the \emph{same leading eigenvector}, $v_{\mathrm{lead}}=\tfrac13\one$,
because both fix the uniform vector. Yet every directed edge differs: the
weight on $1\!\to\!2$ is $0.5$ in $\T$ but $0.3$ in $\T'$, and the weight on
$2\!\to\!1$ is $0.3$ in $\T$ but $0.5$ in $\T'$, and likewise around the cycle.
The pair is drawn in Figure~\ref{fig:obseq}. No statistic invariant under the
exchange $\T\leftrightarrow\T^{\!\top}$ can separate them, and the directed
comovement that would separate them, the asymmetry of the cross-dependence at
the level of the individual ordered pair, is exactly the transfer-entropy
direction \citep{schreiber2000}. The leading eigenpair is therefore firmly identified and reported; the
direction of an individual edge is left unidentified, and any intervention named
on a single directed edge is refused. This is the certification discipline of
Principle~\ref{p:gate} made fully explicit on a matrix one can hold in the hand:
the firmly identified content $(\lambda_0,v_{\mathrm{lead}})$ is common to the
whole observationally equivalent class, and only that content is used.

\begin{figure}[t]
\centering
\begin{tikzpicture}[>=stealth,node distance=24mm,thick,
  every node/.style={circle,draw,minimum size=7mm,inner sep=1pt,font=\small}]
\begin{scope}[xshift=0cm]
  \node (a1) at (90:14mm) {$1$};
  \node (a2) at (210:14mm){$2$};
  \node (a3) at (330:14mm){$3$};
  \draw[->,blue] (a1) to[bend left=12] node[draw=none,fill=none,above left,font=\scriptsize]{$0.5$} (a2);
  \draw[->,blue] (a2) to[bend left=12] node[draw=none,fill=none,below,font=\scriptsize]{$0.5$} (a3);
  \draw[->,blue] (a3) to[bend left=12] node[draw=none,fill=none,above right,font=\scriptsize]{$0.5$} (a1);
  \draw[->,gray] (a2) to[bend left=12] node[draw=none,fill=none,above right,font=\scriptsize]{$0.3$} (a1);
  \draw[->,gray] (a3) to[bend left=12] node[draw=none,fill=none,below,font=\scriptsize]{$0.3$} (a2);
  \draw[->,gray] (a1) to[bend left=12] node[draw=none,fill=none,above left,font=\scriptsize]{$0.3$} (a3);
  \node[draw=none,fill=none] at (270:24mm) {$\T$};
\end{scope}
\begin{scope}[xshift=6cm]
  \node (b1) at (90:14mm) {$1$};
  \node (b2) at (210:14mm){$2$};
  \node (b3) at (330:14mm){$3$};
  \draw[->,blue] (b2) to[bend left=12] node[draw=none,fill=none,above right,font=\scriptsize]{$0.5$} (b1);
  \draw[->,blue] (b3) to[bend left=12] node[draw=none,fill=none,below,font=\scriptsize]{$0.5$} (b2);
  \draw[->,blue] (b1) to[bend left=12] node[draw=none,fill=none,above left,font=\scriptsize]{$0.5$} (b3);
  \draw[->,gray] (b1) to[bend left=12] node[draw=none,fill=none,above left,font=\scriptsize]{$0.3$} (b2);
  \draw[->,gray] (b2) to[bend left=12] node[draw=none,fill=none,below,font=\scriptsize]{$0.3$} (b3);
  \draw[->,gray] (b3) to[bend left=12] node[draw=none,fill=none,above right,font=\scriptsize]{$0.3$} (b1);
  \node[draw=none,fill=none] at (270:24mm) {$\T'=\T^{\!\top}$};
\end{scope}
\end{tikzpicture}
\caption{An observationally equivalent pair. The operators $\T$ and
$\T'=\T^{\!\top}$ share the spectrum $\{0.8,\,-0.4\pm0.1732\,i\}$ and the leading
eigenvector $\tfrac13(1,1,1)^{\!\top}$ (the firmly identified objects), but every
directed edge is reversed: the heavy ($0.5$) arc runs $1\!\to\!2\!\to\!3\!\to\!1$
on the left and $1\!\to\!3\!\to\!2\!\to\!1$ on the right. The direction of an
individual edge, the transfer-entropy direction, is left unidentified.}
\label{fig:obseq}
\end{figure}

\subsection*{(e) The joint limit is the competitive equilibrium}

It remains to verify condition (v) and to assemble the limit. Condition (v) is
the standing restriction that, in the limiting economy, the recovered operator has
ceased to co-evolve with the strategies: the structure is frozen at the selected
$W$ and the topology no longer moves. Granting it, carry the economy to the joint
limit. By \eqref{eq:agentlimit} the agent's choice becomes the deterministic best
response; by \eqref{eq:esslimit} the population sits on the complete-graph
structure; by \eqref{eq:ratelimit} the entropy rate is zero and the source is a
deterministic rest state; by expectational stability the belief is the fixed
point $m^{\ast}=a^{\ast}$. The recursion \eqref{eq:var} loses its innovation and
becomes the deterministic map $a=\alpha\one+\phi W a$, whose rest point is
\begin{equation}
\label{eq:limitrest}
a^{\ast}=\alpha\,(I-\phi W)^{-1}\one
=\frac{\alpha}{1-\phi(n-1)}\,\one
=\tfrac{5}{3}\,(1,1,1)^{\!\top}.
\end{equation}
This is, term for term, the competitive equilibrium computed directly at
\eqref{eq:nashclosed}. The joint limit of the adaptive economy is the Walrasian
rest point of the example, and not by stipulation: each of the five conditions
removed one source of motion, and what is left when all five are imposed is the
single determinate allocation the canon would have written down. Away from the
limit the same economy carries a positive entropy rate $\entrate_\Delta(\beta)>0$
and the directed operator $\T$ of \eqref{eq:Top}, whose oriented asymmetry
$c_1-c_2=0.2$ has no expression whatever in the equilibrium primitive. The
equilibrium is recovered exactly, and exactly as a limit.

\begin{table}[t]
\centering
\small
\begin{tabular}{rccccc}
\toprule
$k$ & mean $a_i$ & dispersion $\beta^{-1/2}$ & rate $\entrate_\Delta$ (nats) &
ESS share $x_1$ & $\|a-a^{\ast}\|$\\
\midrule
$4$    & $1.6667$ & $0.5000$ & $0.8603$ & $0.8511$ & $0$\\
$16$   & $1.6667$ & $0.2500$ & $0.2166$ & $0.9991$ & $0$\\
$64$   & $1.6667$ & $0.1250$ & $0.0007$ & $1.0000$ & $0$\\
$256$  & $1.6667$ & $0.0625$ & $0.0000$ & $1.0000$ & $0$\\
\midrule
limit  & $1.6667$ & $0$      & $0$      & $1$      & $0$\\
\bottomrule
\end{tabular}
\caption{The five limits converging together, with capacity and selection
intensity set to a common $k=\beta=\sigma$ and resolution $\Delta=1$
($\alpha=1,\phi=0.2$). The common $k=\beta=\sigma$ traces one admissible scaling
path of Definition~\ref{def:path}; by Lemma~\ref{lem:pathindep} the joint limit
does not depend on that choice. The mean action equals the competitive value $a^{\ast}_i=5/3$
at every $k$, since the Gibbs mean is the best response at the self-consistent
belief; the dispersion $\beta^{-1/2}$, the entropy rate $\entrate_\Delta$, and the
distance of the population from the evolutionarily stable structure all fall to
zero. The bottom row is the joint limit, the competitive equilibrium of
\eqref{eq:limitrest}.}
\label{tab:converge}
\end{table}

\begin{remark}
The example is deliberately symmetric, which makes the firmly identified leading
eigenvector uniform and the competitive rest point common across agents; the
content of part~(d) does not depend on that symmetry. Replacing the complete
graph by any weighted structure with a strictly dominant Perron root leaves the
construction intact: the leading eigenpair stays firmly identified, the
transpose stays an observationally equivalent twin with reversed directed edges,
and the joint limit stays the directly computed competitive allocation. Any real
matrix is isospectral with its transpose, and a non-negative irreducible structure
has a real, simple Perron eigenvector by the Perron--Frobenius theorem, so the
firmly identified content is well defined without symmetry. What the
symmetry buys is only that every number can be written in closed form. The single
non-closed object, the quantised entropy rate \eqref{eq:wkrate}, is reported
numerically in Table~\ref{tab:converge}; its limit \eqref{eq:ratelimit} is exact.
\end{remark}

\section{What the equilibrium limit erases}
\label{sec:erasure}

The nesting tells what the limit recovers; the converse tells what it costs, and the cost is exact. The joint limit of Theorem~\ref{thm:nesting} is singular: each condition annihilates a quantity the adaptive primitive carries.

\begin{proposition}[The erased contents]
\label{prop:erasure}
The joint limit of Theorem~\ref{thm:nesting} deletes exactly five contents of the adaptive primitive:
\begin{enumerate}[label=\emph{(\arabic*)},leftmargin=2.4em]
\item \emph{Novelty.} As the entropy rate is sent to zero (condition (i)) the source's per-period novelty vanishes, and the forecastability ceiling rises to perfect prediction.
\item \emph{Non-stationarity.} The asymptotically-mean-stationary structure collapses to plain stationarity, so the non-stationarity in which a change of regime is legible is lost.
\item \emph{Irreversibility.} As entropy production is sent to zero (condition (i) again) the directed irreversibility measured by the entropy-production functional vanishes, and the economy's arrow of time with it; the limit is time-reversible, as a rest state must be.
\item \emph{Adaptation.} As capacity and selection intensity diverge (conditions (ii)--(iii)) the off-optimum revision of the bounded agent and the selection dynamics of the population freeze into a fixed best response, so adaptation, the process by which the economy learns, has nothing left to perform.
\item \emph{The moving structure of crisis.} As learning converges and the topology is fixed (conditions (iv)--(v)) the path dependence by which the present state remembers its history, and the co-evolving dependence structure whose concentration is the order parameter of crisis, are extinguished: the limiting economy has no crisis topology because it has no topology that moves.
\end{enumerate}
The enumeration is complete because conditions (i)--(v) are jointly the definition of the limit, and each carried quantity is governed by one of them.
\end{proposition}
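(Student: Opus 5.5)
The proof treats Proposition~\ref{prop:erasure} as a bookkeeping statement with two parts. First, each of the five listed contents is a functional of the adaptive primitive that the corresponding condition of Theorem~\ref{thm:nesting} sends to its degenerate value. Second, there is a completeness clause: every quantity the primitive carries beyond the canon is governed by one of the conditions (i)--(v). For each content I would name the carrying functional explicitly, then show that the relevant condition drives it to zero or to a trivial value. The worked example of Section~\ref{sec:worked} serves as a check that each functional is nontrivial away from the limit, so that the erasure is genuine and not vacuous.

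The steps run in order. For \emph{novelty}, the carrier is the entropy rate \eqref{eq:rate}. Condition (i) gives $\entrate\to0$. The forecastability statement follows from the standard identity relating entropy rate to the conditional entropy of the present given the infinite past: a zero rate means $X_0$ is a.s.\ a function of its past. In the Gaussian example the one-step error variance is $1/\beta\to0$. For \emph{non-stationarity}, the carrier is the discrepancy between $\mathbb P$ and its stationary mean $\bar\mu$. By (i) and (A1), the limit is a stationary deterministic rest state, a point mass invariant under $\tau$, so $\mathbb P=\bar\mu$ and the AMS class collapses to the stationary one. For \emph{irreversibility}, I would take as carrier the entropy-production functional, the relative entropy rate between the path law and its time reversal. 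A point-mass rest state is invariant under time reversal, so this relative entropy is zero. In the Gaussian case I would note it is bounded by a quantity of order $1/\beta$, as stated in part~(c).

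For \emph{adaptation}, the carriers are the dispersion of the Gibbs law and the off-ESS population mass. These vanish by Proposition~\ref{prop:randomutility} and (ii), and by (A2) and (iii), respectively. The revision map is then constant at the best response, so the learning increment is identically zero. For \emph{crisis structure}, the carriers are the learning state's dependence on initial beliefs and the time variation of $\T$. Condition (iv), via (A3), makes the belief converge to $m^\ast$ regardless of initialisation, which erases path dependence. Condition (v), via (A4), freezes $\T$, so any order parameter defined as a change in concentration of the spectrum of $\T$ is identically zero.

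The main obstacle is the completeness clause, the word ``exactly''. As stated it is close to definitional. To make it a proof, I would fix an explicit list of the state variables of the adaptive primitive: the law $\mathbb P$ versus $\bar\mu$, the entropy rate and entropy production, the agent fields $\rho$, the population shares, the belief state, and the operator $\T$ as a time-indexed object. I would then check that each is assigned to exactly one of (i)--(v). This step only proves exhaustiveness relative to that list, and I would say so. A secondary difficulty is that (i) alone yields zero entropy production only through (A1), since a zero-rate periodic orbit can be irreversible. I would therefore invoke (A1) explicitly in item (3), rather than rely on (i) by itself.
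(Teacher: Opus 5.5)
Your plan follows the paper's own proof. That proof reads off each named functional (forecastability ceiling, entropy-production functional, selection dynamics, concentration order parameter) at the value its governing condition imposes. It treats the passage from AMS to stationarity as immediate, since a stationary law is its own stationary mean, and takes completeness as definitional. You improve on it by making exhaustiveness explicitly relative to a list of state variables. Your observation that a zero-rate periodic orbit can be irreversible is also correct. However, condition (i) of Theorem~\ref{thm:nesting} already stipulates that entropy production vanishes, so (A1) gives an alternative route to item (3) rather than a needed repair.

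Three of your individual steps do not go through as written.

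\emph{Path dependence.} (A3) gives only \emph{local} convergence, so it cannot deliver convergence to $m^\ast$ ``regardless of initialisation''. If the learning rule had several E-stable fixed points, the limiting state would record which basin the history began in, which is exactly the path dependence item (5) says is erased. You need the uniqueness presupposed by ``its'' E-stable fixed point in condition (iv), plus global convergence; the contraction $\rho(\Phi)=0.4<1$ supplies this in the worked example. Otherwise you must read path dependence only at the limit object, where it is trivially nil.

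\emph{Adaptation.} In item (4) the revision map $m\mapsto\alpha\one+\Phi m$ becomes deterministic at $\beta=\infty$ but is not constant. The learning increment vanishes only at $m^\ast$, so that step uses (iv), not just (ii)--(iii).

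\emph{Gaussian check of item (3).} This check fails under your own definition of the functional. In \eqref{eq:var} the stationary covariance is $\Sigma_\beta=\beta^{-1}\Sigma_1$, with $\Sigma_1=\Phi\Sigma_1\Phi^{\top}+I$. Hence the pair law of $(a(t-1),a(t))$ and that of its time reversal have covariances that both scale by $1/\beta$, with equal means. Gaussian relative entropy is invariant under a common scaling of both covariances. The relative-entropy rate is therefore independent of $\beta$:
\begin{itemize}
\item it is identically zero for the symmetric $W$ of the example, which therefore cannot witness nontriviality of item (3) at all;
\item it is a $\beta$-independent positive constant whenever $\Phi\Sigma_1$ is not symmetric, and drops to zero only at the corner itself.
\end{itemize}
So the $1/\beta$ rate asserted in part~(c) does not hold for this functional. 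What is $O(1/\beta)$ is the raw asymmetry $\Phi\Sigma_\beta-\Sigma_\beta\Phi^{\top}=\beta^{-1}(\Phi\Sigma_1-\Sigma_1\Phi^{\top})$. If you want a witness that decays along the path, use that quantity or a $\Delta$-quantised functional (which is not scale-invariant). Pair it with a directed $\Phi$, for instance one built on the circulant $\T$ of \eqref{eq:Top}.
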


\begin{proof}[Proof sketch]
Each clause restates the value taken by a named functional when its governing parameter reaches the limit: the forecastability ceiling at zero entropy rate, the entropy-production functional at zero production, the selection dynamics at infinite intensity, and the concentration order parameter at a fixed operator. The passage from asymptotic mean stationarity to plain stationarity is immediate, since a stationary mean measure equals the measure itself.
\end{proof}

The equilibrium primitive is therefore not a neutral simplification but the deletion of five specific contents. A theory whose primitive is the fixed point cannot, by construction, carry any of them; a theory whose primitive is the source carries all five and surrenders them only in the limit. This is the precise content of the claim that the change of primitive is not a change of vocabulary.

\section{Welfare as viability}
\label{sec:welfare}

If the primitive changes, so must the standard by which an economy is judged good. The canon judges an allocation: it is efficient when no reallocation can help one agent without hurting another, a property of a static configuration read at the rest point. An order whose primitive is a moving source cannot rest its welfare on a configuration it never durably occupies. The adaptive standard is the survival of a trajectory.

\begin{definition}[Welfare as viability under bounded recovery]
\label{def:viability}
Let an adaptive economy in the sense of Principle~\ref{p:prim} evolve on its state space. Its \emph{viable region} is the set of states on which the source retains a finite entropy rate and the dependence operator remains recoverable at the resolution certified by Principle~\ref{p:gate}, equivalently the set on which the operator's concentration stays below the dependence-collapse threshold. \emph{Welfare} is the criterion that ranks policies and structures by the economy's continued occupation of the viable region under the bounded recovery its agents and observers can actually perform: a policy is welfare-improving when it enlarges the viable region or moves the economy further from its collapse boundary, and welfare-reducing when it does the reverse. Static allocative efficiency is its restriction to the interior, where the boundary is nowhere near.
\end{definition}

\begin{proposition}[Competitive efficiency as the frictionless corner of viability]
\label{prop:welfarenest}
In the joint limit of Theorem~\ref{thm:nesting} the viable region of Definition~\ref{def:viability} fills the whole feasible set, the collapse boundary recedes without bound, and welfare reduces to the static Pareto efficiency of the first welfare theorem \citep{debreu1959}, recovered as the frictionless competitive benchmark.
\end{proposition}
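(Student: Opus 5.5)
The plan is to unpack Definition~\ref{def:viability} clause by clause and show that each defining condition of the viable region becomes vacuous along the admissible scaling path. I will then invoke the first welfare theorem at the limiting rest state delivered by Theorem~\ref{thm:nesting}.

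The viable region is cut out by three requirements: a finite entropy rate, recoverability of the operator at the certified resolution, and operator concentration below the collapse threshold. I will take them in turn.

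\begin{enumerate}[label=(\roman*),leftmargin=2.4em]
\item \emph{Finite entropy rate.} By condition (i) of Theorem~\ref{thm:nesting}, together with (A1), the entropy rate tends to zero and the source becomes a stationary deterministic rest state. Finiteness of the rate therefore holds at every feasible state in the limit.
\item \emph{Recoverability.} I will use the threshold of Proposition~\ref{prop:identification}(iv), $\lambda_{\max}>\nu\sqrt{d/N}$. Along the path the innovation scale $\eta(t)\downarrow0$; in the worked example the noise is $\nu^2=1/\beta\to0$. So the threshold tends to zero for any fixed $d,N$, and any non-degenerate leading eigenpair is identified. By condition (v) the operator is frozen, so the certified content cannot drift across the boundary.
\item \emph{Concentration below threshold.} Here I will argue that the concentration order parameter is a functional of the co-evolving operator, and that (iv)--(v) together with the erasure of crisis structure (Proposition~\ref{prop:erasure}(5)) hold it fixed at its limiting value. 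I will then define the collapse boundary as the set where that functional meets the threshold.
\end{enumerate}

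The claim that the boundary \emph{recedes without bound} will be made precise as follows: the distance from any feasible state to the collapse set tends to $+\infty$ along the path, or, more modestly, the collapse set becomes empty. From this the viable region equals the feasible set.

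Once the viable region is the whole feasible set, the viability ranking of Definition~\ref{def:viability} no longer discriminates by boundary distance. The definition itself states that welfare then restricts to static allocative efficiency in the interior, so welfare reduces to the Pareto ranking of allocations.

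I then invoke Theorem~\ref{thm:nesting}. The limiting object is a competitive economy with agents who are frictionless maximisers (condition (ii)) holding rational expectations (condition (iv)), and its rest state is a Walrasian equilibrium. Under local non-satiation of the limiting rational preorders, the first welfare theorem \citep{debreu1959} gives Pareto efficiency of that rest state, which recovers the competitive benchmark. For the symmetric example I would also check directly that $a^\ast=\tfrac53\one$ sits in the viable region for every $\beta$.

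The main obstacle is step (iii), together with the ``without bound'' claim. The paper has not formally defined the concentration order parameter or the collapse threshold, so the argument can only be as rigorous as the definition I supply. I would first fix a concrete choice: concentration equals $\lambda_{\max}(\T_{\mathrm s})/\operatorname{tr}|\T_{\mathrm s}|$, and collapse is $\rho(\Phi)\ge1$. The argument then needs the limiting structure to satisfy the contraction $\rho(\Phi)<1$, as it does in the worked example, and that contraction is not delivered by (A1)--(A4) alone. So I would add it as an explicit hypothesis inherited from (A1): a strict Lyapunov function exists on the whole feasible set. With that hypothesis in place, the collapse set is empty in the limit.

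A secondary subtlety is the first welfare theorem itself. Pareto efficiency must be read relative to the limiting rational preorders, which requires local non-satiation. I would state local non-satiation as inherited from the value function $U$ being strictly increasing in some direction.
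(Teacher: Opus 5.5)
Your architecture is the paper's. You show that the requirements defining the viable region are vacuous at the corner, conclude that the viable region is the whole feasible set, and then read off from the last sentence of Definition~\ref{def:viability} that welfare restricts to the allocative criterion. The paper does this in one line, working only with the concentration form of the definition (the ``equivalently'' clause). You are right that its step from ``the operator ceases to co-evolve'' to ``the concentration is bounded away from its single-centre ceiling'' is not delivered by (A1)--(A4): a frozen structure need not be an unconcentrated one.

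\textbf{The added hypothesis controls the wrong functional.} You measure concentration by $\lambda_{\max}(\T_{\mathrm s})/\operatorname{tr}|\T_{\mathrm s}|$, but you declare collapse when $\rho(\Phi)\ge1$. That is a stationarity condition on the transition map, not a level set of your concentration. A contraction places no bound on your ratio: take $\Phi=\T=\epsilon vv^{\top}$ with $v$ a unit vector and $0<\epsilon<1$. Then $\rho(\Phi)=\epsilon<1$, yet $\T_{\mathrm s}$ has rank one and the ratio equals $1$, the single-centre ceiling. What you need is an assumption stated directly on the structure selected in (iii) and frozen by (v): its concentration lies strictly below the collapse threshold. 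Under (v) concentration is a function of that structure and not of the state, so the collapse set in state space is then empty. That is the only sense in which the boundary ``recedes without bound''.

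\textbf{Step (ii) should be dropped rather than repaired.} The innovations you send to zero are the same innovations that generate the comovement. In the worked example $\Sigma=\tfrac1\beta(I-\Phi^2)^{-1}$, so the spike-to-bulk ratio of the covariance is $0.96/0.84$ for every $\beta$. That ratio is what the threshold of Proposition~\ref{prop:identification}(iv) compares, so the threshold does not become easier to clear as $\beta\to\infty$. Moreover, at the limit itself $\bar\mu$ is a point mass and $L^2_0(\bar\mu)=\{0\}$. There is no comovement left to recover, and your concentration ratio becomes $0/0$.

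\textbf{The appeal to the first welfare theorem is a separate gap.} Local non-satiation is not enough; the theorem also requires that agents interact only through prices, with no externalities, and nothing in (A1)--(A4) supplies that. The paper's own worked example, which it presents as an instance of the joint limit, violates this. Payoffs depend on others' actions through $\phi W$, and local non-satiation holds at $a^{\ast}=\tfrac53\one$ (each $u_i$ is strictly increasing in the others' actions there). Yet $a^{\ast}$ is Pareto-dominated: each agent gets $u_i=25/18\approx1.389$ at $a^{\ast}$ but $u_i=1.6$ at the common action $2\,\one$. So efficiency of the limiting rest state does not follow from Theorem~\ref{thm:nesting} plus local non-satiation. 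It rests on reading that rest state as Walrasian, an identification the paper itself confines to the example, and which there does not carry efficiency.

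The proposition does not need this step. It asks only that the welfare criterion reduce to static Pareto efficiency, which you already obtain, as the paper does, from the last clause of Definition~\ref{def:viability} once the viable region is everything. Either drop the efficiency claim, or make an externality-free, complete-markets limit economy an explicit hypothesis.
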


\begin{proof}[Proof sketch]
At the limit the entropy rate vanishes and the operator ceases to co-evolve, so the concentration is bounded away from its single-centre ceiling, no state lies near the collapse boundary, and the viable region is the entire feasible set; on that set the only criterion that can still discriminate is the allocative one, the static Pareto efficiency recovered as the competitive benchmark.
\end{proof}

\begin{remark}
Away from the limit the two criteria part, and the gap between them is a systemic-efficiency wedge with an unavoidable component, set by the finite capacity of the agents, and an avoidable component, set by the systemic coupling the recovered operator carries; a Pigouvian correction \citep{pigou1920} of the avoidable component would be identified in direction but not in magnitude, in conformity with the gate. The closed-form decomposition of this wedge is left to separate work; the nesting and erasure results do not depend on it.
\end{remark}

That welfare becomes viability is the re-founding carried into the normative domain, for it changes what policy is for. A policy that maximised static efficiency could be, in this order, actively harmful, if it bought a marginal allocative gain by driving the dependence structure towards the concentration at which it collapses; and a policy that sacrificed a little static efficiency to hold the economy clear of that boundary could be correct. The canon could not state such a trade-off, having no boundary in its ontology and no moving structure to approach it. The adaptive order states it as the difference between Definition~\ref{def:viability} and the efficiency it nests, and reads the operative quantity, the distance to collapse, off a recovered operator at the gate's certified resolution rather than off a known model.

\section{Relation to existing programmes}
\label{sec:relation}

Each of the five conditions of Theorem~\ref{thm:nesting} has a literature, and in every case the limit that condition imposes is already understood there. The contribution of the theorem is not any single limit but their conjunction: the competitive, rational-expectations canon is the corner at which all five hold at once, and no one of the contributing programmes nests the canon by itself.

The capacity limit of condition~(ii) is the subject of the theory of rational inattention, in which an agent processing information at a finite Shannon rate chooses as if maximising attainable value net of a mutual-information cost \citep{sims2003,sims2010,mackowiak2023}. The Gibbs law of Principle~\ref{p:agent} is the solution of exactly that problem, and the multinomial-logit form established in Proposition~\ref{prop:randomutility} is the discrete-choice foundation that the rational-inattention literature derives from the same variational principle \citep{matejkamckay2015,caplindean2015}. What the present theory adds is not the agent's optimisation, which it inherits, but the reading of its infinite-capacity limit as one face of a single corner: rationality is recovered as $\beta\to\infty$, but only in concert with the other four limits does the recovery deliver the canon rather than a frictionless agent embedded in a still-moving economy. One feature of the inattention problem is held back for separate treatment: there the unconditional choice law is itself endogenous to the mutual-information constraint, whereas the Gibbs law of Principle~\ref{p:agent} fixes a prior; the two formulations coincide at the corner and part away from it \citep{matejkamckay2015}.

Condition~(iv) is the recovery of rational expectations as the expectationally stable rest point of an adaptive learning rule, which is the programme of \citet{evanshonkapohja2001}. That a rational-expectations equilibrium is selected precisely when it is learnable under least-squares or stochastic-gradient updating, and may be discarded when it is not, is their central finding; the theorem imports it as the fixed-point component of the nesting, distinguished in the proof from the three parameter limits because it is reached by convergence rather than by sending a parameter to an extreme. The same programme supplies the discipline insisted on in Principle~\ref{p:gate}: the determinacy of a forward-looking equilibrium is a property of the policy map, settled by an eigenvalue count on that map and by the separate criterion of learnability, and is never read from the recovered spectral radius, which certifies the structure of contemporaneous dependence and not the stability of an expectational system.

Condition~(iii), the concentration of the population on a single configuration as the selection intensity diverges, is the limit studied in evolutionary game theory, where the evolutionarily stable strategy is the configuration robust to the entry of rare mutants and the rest point to which the replicator and related dynamics are drawn \citep{maynardsmith1982,weibull1995,hofbauersigmund1998,sandholm2010}. The logit-response sharpening used here, in which the population share of a configuration is exponential in its fitness at an intensity that indexes selection pressure, is the standard stochastic softening of that dynamic, and its infinite-intensity limit is the deterministic selection of the stable configuration. A line of evolutionary general equilibrium already shows that selection can pick out competitive behaviour: stochastic-stability arguments select risk-dominant or efficient conventions \citep{kandorimailathrob1993,young1993}, and imitation among quantity-setters can select the Walrasian outcome \citep{vegaredondo1997}. Condition~(iii) is the counterpart of that selection inside the present framework; what the nesting adds is not the selection of Walrasian behaviour alone but its conjunction with the other four limits, without which selection delivers a competitive configuration still embedded in an economy that carries a positive rate and a co-evolving structure. The worked example of Section~\ref{sec:worked} carries this concretely in the value ranking \eqref{eq:Vs} and the share that converges to the point mass \eqref{eq:esslimit}.

The certification discipline of Principle~\ref{p:gate} is a partial-identification commitment: the theory reports the spectrum and the leading eigenvector, which the realised second-order comovement fixes, and refuses the individual directed edge, which it does not. The worked example exhibits the refusal exactly, in the transpose pair of \eqref{eq:Ttwin} that no invariant of the data can separate. An economy steered only through firmly identified functionals is robust, in the sense of acting well against every structure in the observational-equivalence class at once, an aim continuous with the response to model ambiguity expressed in the maxmin tradition \citep{gilboaschmeidler1989}, applied here not to the laws of a known structure but to the class of structures the data leave observationally equivalent.

The synthesis is the point. Sent singly, each of these limits is a known specialisation; sent together, they reconstruct the entire apparatus of the canon, existence and indeterminacy and regular determinacy alike, as the still corner of a moving system. The entropy rate and the recovered operator that the adaptive economy carries away from the corner are the content that organises the contributing programmes into one object and that the equilibrium primitive, by construction, cannot express.

\section{Conclusion}
\label{sec:conclusion}

Two axiomatic systems now stand side by side, and their relation is the one a re-founding requires. The equilibrium canon is the corner of the adaptive order at which the entropy rate vanishes, the channel capacity diverges, the selection intensity grows infinitely sharp, the learning has converged to its rational-expectations fixed point, and the dependence structure has ceased to co-evolve, recovered there exactly by the general nesting theorem and strictly generalised everywhere else; what that recovery deletes, and what therefore marks the distance between the two orders, is enumerated exactly by the erasure proposition. The change of primitive is, at this point, a theorem: the old foundation is a special case of the new, by a nesting exhibited and not asserted, and the questions on which the canon fell silent by its own theorem, the questions of attainment, of selection among multiple configurations, and of the behaviour of an economy away from its rest point, are exactly the questions the adaptive primitive was constructed to carry.

The nesting also settles what such a foundation licenses. A theory organised as principles followed by certified consequences is one whose commitments can be set out in advance and tested, and whose unidentified quantities carry no falsifiable content. The strict generalisation of Theorem~\ref{thm:nesting} is what makes an empirical programme possible at all: the positive entropy rate and the recovered dependence, the content the equilibrium primitive cannot express, are precisely the quantities in which a regime of crisis is legible as a measurable change of structure, and a theory whose primitive is a fixed point has, by construction, no such quantity to measure. Because the recovered structure is behaviour-free, the foundation is indifferent to whether the minds and machines beneath it satisfy the rationality axioms, a property that occupies a vanishing slice of the relations a preference may instantiate. Even the standard by which such an economy is judged is re-founded with it, from the static efficiency of an allocation to the viability of a trajectory under bounded recovery. The competitive equilibrium is not denied. It is located: the still, frictionless, infinitely sharp corner of a system whose nature is to move, to err, and to revise.

\appendix
\section{Proofs}
\label{app:proofs}

This appendix gives the proofs in fuller form than the sketches of the main text. The measure-theoretic and operator-theoretic apparatus on which Theorem~\ref{thm:nesting} ultimately rests is that of the ergodic theory of asymptotically mean stationary sources \citep{gray2023,graykieffer1980}.

\subsection*{Proof of Proposition~\ref{prop:randomutility}}
Fix a finite menu $A=\{a_1,\dots,a_m\}$ and write $U_j=U(a_j)$, $p_{0,j}=p_0(a_j)$. The free-energy problem~\eqref{eq:freeenergy} maximises $F(q)=\sum_j q_jU_j-\tfrac1\beta\sum_j q_j\log(q_j/p_{0,j})$ over the simplex $\{q\ge0,\ \sum_j q_j=1\}$. The relative-entropy term is strictly convex and the value term affine, so $F$ is strictly concave and its maximiser is unique and interior. A multiplier $\nu$ for the equality constraint gives the stationarity condition $U_j-\tfrac1\beta\big(\log(q_j/p_{0,j})+1\big)-\nu=0$, whence $q_j\propto p_{0,j}e^{\beta U_j}$; normalising returns the Gibbs law~\eqref{eq:gibbs}.

For the random-utility representation, let $\varepsilon_1,\dots,\varepsilon_m$ be independent with the Gumbel law $\mathbb P(\varepsilon_j\le x)=\exp(-e^{-\beta x})$, of scale $1/\beta$, and set $\tilde U_j=U_j+\tfrac1\beta\log p_{0,j}+\varepsilon_j$. By the max-stability of the Gumbel family, $\mathbb P(\tilde U_j\ge\tilde U_k\ \text{for all }k)=e^{\beta U_j+\log p_{0,j}}\big/\sum_k e^{\beta U_k+\log p_{0,k}}=q^{\ast}(a_j)$, the multinomial-logit identity \citep{mcfadden1974}. The Gibbs law is therefore the choice law of the random preference field that ranks alternatives by $\tilde U$; for every finite $\beta$ it assigns positive probability to each alternative and so is genuinely stochastic, and as $\beta\to\infty$ the noise scale $1/\beta\to0$ and the law concentrates on $\arg\max_j U_j$. This field is the regular, transitive corner of Definition~\ref{def:agentspace}; a general field places mass on the intransitive and incomplete relations to which no such additive-noise representation need apply. \qed

\subsection*{Proof of Proposition~\ref{prop:rare}}
A binary relation on $m$ alternatives is a subset of the $m(m-1)$ ordered off-diagonal pairs, so there are $2^{m(m-1)}$ of them. The complete and transitive relations are the weak orders, in bijection with the ordered partitions of the alternatives into indifference classes ranked by preference; their number is the ordered Bell (Fubini) number $a_m=\sum_{k=1}^{m}k!\,S(m,k)$, with $S(m,k)$ the Stirling numbers of the second kind. Since $k!\le m!$ for $k\le m$, $a_m\le m!\sum_{k=1}^m S(m,k)=m!\,B_m$ with $B_m$ the $m$-th Bell number, and $B_m\le m!$ for $m\ge2$, so $a_m\le(m!)^2$ and $\log a_m\le 2\log(m!)=O(m\log m)$. By contrast $\log_2 2^{m(m-1)}=m(m-1)=\Theta(m^2)$. Hence $\log\big(a_m/2^{m(m-1)}\big)=O(m\log m)-\Theta(m^2)\to-\infty$, and the fraction of relations that are rational preorders tends to zero. \qed

\subsection*{Proof of Proposition~\ref{prop:behaviourfree}}
Let a population be given as random preference fields $\{\rho_i\}$ with measurable indifference-resolution rules. On each menu the realised choices are random variables by construction, so the economy's realised path $X=(X_t)$ is a well-defined stochastic process whatever the completeness, transitivity, or determinacy of the $\rho_i$. Asymptotic mean stationarity is a property of the law of $X$ under the shift, not of the agents: where it holds, the source theory assigns $X$ an entropy rate~\eqref{eq:rate} equal to that of its stationary mean $\bar\mu$, and the second-order comovement of $\bar\mu$ defines the dependence operator $\T$ on $L^2_0(\bar\mu)$, certified at the resolution of Principle~\ref{p:gate}. None of these constructions refers to a maximal element of any agent's preference. The Walrasian alternative does: individual demand is the preference-maximal bundle in the budget set, defined only when the preference admits a maximal element, in the standard development through completeness and transitivity; the existence theory that weakens those axioms \citep{mascolell1974,shafersonnenschein1975} retains budget-set maximisation, the very step the recovery omits. \qed

\subsection*{Proof of Theorem~\ref{thm:nesting}}
The five conditions are imposed together; we trace the effect of each, then their conjunction.

\emph{(i) The source becomes a deterministic rest state.} An asymptotically mean stationary source has an entropy rate $\entrate$ equal to that of its stationary mean, zero if and only if the present is asymptotically determined almost surely by the past. A zero rate is necessary but not sufficient for a rest state, a deterministic periodic orbit also having zero rate; the strict-Lyapunov hypothesis, that the dynamics admit a strict Lyapunov function and hence no persistent cycle, removes that case, so vanishing entropy production and vanishing rate together force convergence to a stationary deterministic fixed state.

\emph{(ii) Each agent becomes the rational maximiser.} By Proposition~\ref{prop:randomutility} the agent's choice is the Gibbs law $q^{\ast}\propto p_0e^{\beta U}$. As $\beta\to\infty$, Laplace's principle gives $q^{\ast}\to\delta_{a^{\ast}}$ with $a^{\ast}=\arg\max U$ at a unique maximiser, and the random preference field collapses to the point mass on the rational preorder that ranks by $U$: the bounded agent becomes the unconstrained optimiser of the canon.

\emph{(iii) The population concentrates on the stable configuration.} Under the selection dynamics of Principle~\ref{p:coevol} \citep{maynardsmith1982} the share of a configuration of fitness $V_s$ is the logit weight $\propto e^{\sigma V_s}$. As the selection intensity $\sigma\to\infty$ the shares concentrate on $\arg\max_s V_s$, which under the global-convergence hypothesis is the unique evolutionarily stable configuration, pinning down the strategic structure.

\emph{(iv) Expectations become rational.} Condition~(iv) places the adaptive learning of Principle~\ref{p:coevol} at its expectationally stable fixed point, where the belief on which agents act coincides with the law of motion their actions induce. This is the rational-expectations hypothesis, obtained as the rest point of a convergent dynamics rather than as a parameter sent to a limit, which is why it is a nesting by fixed point and is marked apart from~(i)--(iii).

\emph{(v) The structure is fixed.} Condition~(v) is the standing restriction defining the class: the recovered operator no longer co-evolves with the strategies, so the economy is posed on a stationary topology, as the canon poses it.

\emph{Conjunction.} Under (i)--(v) together the limiting economy has frictionless agents holding rational expectations, a determinate strategic configuration, a fixed structure, and a stationary deterministic rest state satisfying the market-clearing conditions; that rest state is a Walrasian equilibrium, and the axioms of the canon hold at it. Existence of the rest state is the existence theorem of the canon \citep{arrowdebreu1954,mckenzie1954} read in the limit and inherited without loss.

\emph{Strictness.} Away from the joint limit $\entrate>0$ and $\T$ is non-degenerate and co-evolving; neither is a function of the equilibrium primitive, which is a fixed point carrying no rate and no directed operator. The adaptive economy is therefore a strict generalisation and the inclusion of the canon within it is proper. \qed

\subsection*{Proof of Lemma~\ref{lem:pathindep}}
Along an admissible path the pre-limit economy is a function of the three parameters $(\beta,\sigma,\eta)$, which enter through separate coordinates: $\beta$ governs the agent's Gibbs concentration and the innovation variance, $\sigma$ the population share through the logit weight, and $\eta$ the observational quantisation. Each single-parameter limit is attained uniformly in the other two: as $\beta\to\infty$ the Gibbs law concentrates on the best response by Laplace's principle, uniformly in $(\sigma,\eta)$; as $\sigma\to\infty$ the logit share concentrates on the value-maximising structure, uniformly in $(\beta,\eta)$; and as $\eta\to0$ the quantised innovation falls in the zero cell with probability tending to one, uniformly in $(\beta,\sigma)$. The limiting rest state $a^{\ast}=\alpha(I-\phi W)^{-1}\one$ is a function of none of the three. By the Moore--Osgood theorem on the interchange of iterated limits, the joint continuity of the rest-state map at the corner together with the uniform convergence of each coordinate gives that the multivariate limit exists and equals each iterated limit in any order; the joint limit along any admissible path is therefore the same object. \qed

\bibliographystyle{elsarticle-harv}
\bibliography{refs}

\end{document}